
\documentclass{fundam}


\setcounter{page}{1}
\publyear{25}
\papernumber{5}
\volume{195}
\issue{1-4}
\theDOI{10.46298/fi.12354}

\versionForARXIV


\usepackage{amsfonts,latexsym,amssymb}
\usepackage{newtxtext}          %
\usepackage{newtxmath}
\usepackage[T1]{fontenc}
\usepackage[utf8]{inputenc}
\usepackage{enumitem}
\usepackage{bbm}
\usepackage{xspace}
\usepackage[all]{xy}
\usepackage{tikz}

\usepackage[numbers,sort&compress]{natbib}
\usepackage{geometry}

\DeclareMathOperator{\At}{At}
\DeclareMathOperator{\Eq}{\mathbf{Eq}}

\renewcommand{\phi}{\varphi}
\newcommand{\klam}[1]{\ensuremath{\langle #1 \rangle}}
\newcommand{\poss}[1]{\ensuremath{\klam{ #1 }}}
\newcommand{\set}[1]{\ensuremath{\{#1\}}}
\newcommand{\z}{\emptyset}
\newcommand{\Cm}{\ensuremath{\operatorname{{\mathsf{Cm}}}}}
\newcommand{\Cf}{\ensuremath{\operatorname{{\mathsf{Cf}}}}}
\newcommand{\Em}{\ensuremath{\operatorname{{\mathsf{Em}}}}} 

\newcommand{\A}{\ensuremath{A}\xspace}
\newcommand{\B}{\ensuremath{B}\xspace}
\newcommand{\C}{\ensuremath{\mathbb{C}}\xspace}
\newcommand{\F}{\ensuremath{\mathfrak{F}}\xspace}
\newcommand{\V}{\ensuremath{\mathfrak{V}}\xspace}
\newcommand{\W}{\ensuremath{\mathfrak{W}}\xspace}
\newcommand{\Wframe}{\ensuremath{\mathtt{W}}\xspace}
\newcommand{\BW}{\ensuremath{B_\mathtt{W}}\xspace}
\newcommand{\BF}{\ensuremath{B_\mathtt{F}}\xspace}
\newcommand{\FF}{\ensuremath{\mathtt{F}}\xspace}

\newcommand{\two}{\ensuremath{\mathord{\mathbbm{2}}}\xspace}

\newcommand{\df}{\ensuremath{\mathrel{:=}}}
\newcommand{\da}[1]{\ensuremath{\mathop{\downarrow}#1}}
\newcommand{\ua}[1]{\ensuremath{\mathop{\uparrow}#1}}
\newcommand{\uaP}[1]{\ensuremath{\mathop{\uparrow^{\neq}}#1}}
\newcommand{\restrict}{\ensuremath{\mathrel{\upharpoonright}}}
\newcommand{\conv}[1]{#1^{\smile{}~}}

\newcommand{\tand}{\text{ and }}
\newcommand{\tor}{\text{ or }}
\newcommand{\tiff}{if and only if\xspace}
\newcommand{\aright}{``$\Rightarrow$'': \ }
\newcommand{\aleft}{``$\Leftarrow$'': \ }

\newcommand\mrel{\mathrel}
\newcommand{\Implies}{\ensuremath{\mathrel{\Rightarrow}}}

\newcommand{\Va}{\ensuremath{\mathbf{V}}\xspace}
\newcommand{\VaCl}{\ensuremath{\mathbf{V_{Cl}}}\xspace}

\newcommand{\onto}{\twoheadrightarrow}
\newcommand{\into}{\hookrightarrow}
\newcommand{\ginto}{\overset{g}{\into}}
\newcommand{\bonto}{\overset{b}{\onto}}

\newcommand{\wlg}{w.l.o.g.\xspace }

 \parskip=\medskipamount
\setlength{\parindent}{0pt}
\numberwithin{equation}{section}

\title{The Fork and its Role in Unification of Closure Algebras}
 
 \date{}
 
 \author{Ivo D\"untsch\\
Dept. of Computer Science \\
Brock University\\	
St Catharines, Ontario, 
Canada \\
\href{mailto:duentsch@brocku.ca}{duentsch@brocku.ca} \and
Wojciech Dzik \\
Institute of Mathematics \\
University of Silesia \\
Katowice, Poland \\ \href{mailto:wojciech.dzik@us.edu.pl}{wojciech.dzik@us.edu.pl}
 }
 
 \runninghead{I. D\"untsch, W. Dzik}{The fork and its role in unification of closure algebras}
 
  \begin{document}
\thispagestyle{empty}
 \maketitle

 \begin{abstract}
\noindent   We consider the two-pronged fork frame $\FF$ and the variety $\Eq(\BF)$ generated by its dual closure algebra $\BF$. We describe the finite projective algebras in $\Eq(\BF)$ and give a purely semantic proof that unification in $\Eq(\BF)$ is finitary.   
The splitting of the lattice of varieties of closure algebras given by the subdirectly irreducible algebra $\BF$ separates varieties with unitary from varieties with finitary unification type: All varieties with finitary type contain $\BF$, and all varieties with unitary type are contained in the splitting companion of $\BF$.
 \end{abstract}
 
 \section{Introduction}
 
Unification of (first order) terms was introduced by J.A.Robinson as the basic operation of the resolution principle used in automated theorem provers. Nowadays unification plays an essential role in many applications of logic to Computer Science.
  Unification theory for equational theories and logics is an important topic, for example, for automatic deduction and rewriting systems. It  has been a well researched concept for some time;  for details we refer the reader to the chapter by \citet{bs01} in the Handbook of Automated Reasoning, and for an early account of the role of unification in Computer Science we point the reader to the overview by \citet{bur95}. 
 Algebraic unification via projective algebras was introduced and investigated by \citet{ghi97}. He showed that for any equational theory the symbolic and the algebraic unification types coincide via a suitable translation. He  concludes that, broadly speaking, unification in varieties depends only on its finitely presented  projective algebras. The syntactic approach to unification often requires long and complicated calculations, while the semantic approach via projective algebras and their dual injective frames often offers much simpler  solutions. It is particularly useful for the determination of the unification type of a given theory or a logic.  Another important aspect is the effectiveness of this determination. 

In this paper the  algebraic approach via projective algebras is demonstrated to determine the unification type of varieties of closure algebras. 
In particular, we investigate in detail the equational theory of the complex algebra $\BF$ of the 2-pronged fork $\FF$, shown in Figures~\ref{fig:fork} and \ref{fig:forkA}. Elements closed in $\BF$ are shown as bullets $\bullet$.
  
 \begin{figure}[htb]
 \begin{minipage}[t]{0.48\textwidth}
 \vspace{0pt}
   \caption{The 2-pronged fork $\FF$}\label{fig:fork}
$$
\xymatrix{
v &  & w \\
 &u \ar@{->}[ru] \ar@{->}[lu] &
}
$$
 \end{minipage}
 \begin{minipage}[t]{0.48\textwidth}
 \caption{The complex algebra $\BF$}\label{fig:forkA}
 \vspace{-13pt}
\begin{gather*}
\xymatrix{
& \bullet \ar@{-}[rd] \ar@{-}[ld] \ar@{-}[d] & \\
\bullet \ar@{-}[d] \ar@{-}[rd] & \circ \ar@{-}[rd] \ar@{-}[ld]  & \bullet \ar@{-}[ld] \ar@{-}[d] \\
\circ \ar@{-}[rd] & \bullet \ar@{-}[d] & \circ  \ar@{-}[ld] \\
& \bullet &
}
\end{gather*}
 \end{minipage}
\end{figure}
 The 2-pronged fork, or simply, fork, and its logic have received some interest in the literature: \citet{avbb2003} have provided an axiomatization of $\Eq(\BF)$,  \citet{dzik06} proved that it does not have unitary unification, and recently \citet{dkw22} showed  that the logic has finitary unification. The unification results were proved by $n$-Kripke models -- which are variants of usual Kripke models --  and syntactic means. In this paper we provide semantic proofs of these results using the duality between finite closure algebras and finite quasiorders; along the way we provide a characterization of the finite projective algebras in $\Eq(\BF)$. We also give an example of an algebra in $\Eq(\BF)$ which has finitary unification. As the algebraic approach to unification  may not be familiar to all readers, we shall briefly outline its concepts tailored to our situation in Section \ref{sec:alguni}.
 
\section{Notation and first definitions}\label{sec:def}

A \emph{frame} is a structure $\klam{W,R}$ where $W$ is a non-empty set and $R$ a binary relation on $W$;  we denote the converse of $R$ by $\conv{R}$. If $x \in W$ we set $R(x) \df \set{y \in W: x \mathrel{R} y}$. A non-empty subset $U$ of $W$ is \emph{connected} (in the sense of graph theory) if for all $x,y \in U$ there is an $R \cup \conv{R}$-path from $x$ to $y$. A maximally connected subset of $W$ is called a \emph{connected component}, or just \emph{component}. $R$ is \emph{rooted} if there is some $x \in W$ such that there is an $R$-path from $x$ to every element of $W$. 
  
 A \emph{generated subframe} of $\klam{W,R}$ is a structure $\klam{V,S}$ such that $\klam{V,S}$ is a first order substructure of $\klam{W,R}$ and satisfies
 \begin{gather}\label{def:gensubframe}
(\forall u,v \in W)[u \in V \tand u\mrel{R} v \ \Implies v \in V].
 \end{gather}
 If $\klam{V,S}$ is (isomorphic to) a generated substructure of $\klam{W,R}$ we write $\klam{V,S} \ginto \klam{W,R}$. 
 
A \emph{bounded morphism} is a  mapping $p\colon W \to V$ which preserves $R$ and satisfies the \emph{back condition}
\begin{gather}
p(x) \mathrel{S} z \Implies  (\exists y \in W)[x \mathrel{R} y \tand p(y) = z].
\end{gather}
If $p$ is onto we write $\klam{W,R} \bonto \klam{V,S}$.

Suppose that $\F$ is a class of frames. We call $\klam{V,S} \in \F$ \emph{injective with respect to \F},\label{injective} if for every $\klam{W,R} \in \F$ and every injective bounded morphism $q\colon V \into W$ there is some surjective bounded morphism $p\colon W \onto V$ such that $p \circ q$ is the identity on $V$.

Let $\precsim $ be a quasiorder on $W$, i.e. reflexive and transitive. We set $\da{x} \df \set{y \in W: y \precsim x}$, $\ua{x} \df \set{y \in W: x \precsim y}$, and $\uaP{x} \df \set{y \in W: x \precnsim y}$; we will index $\precsim$ if necessary. The relation $\theta_\precsim  \df \set{\klam{x,y}: x\precsim y \tand y\precsim x}$ is an equivalence relation whose set of classes  can be partially ordered by $x/\theta \leq_\precsim  y/\theta$ \tiff $x \mathrel{\precsim } y$; the classes of $\theta_\precsim $ are called \emph{clusters}. A non-empty $M \subseteq W$ is called an \emph{antichain}, if all elements of $M$ are pairwise incomparable. $M$ is called \emph{dense} or  \emph{complete}, if each element of $W$ is below some element of $M$ with respect to $\precsim$.

The \emph{height} $h(W)$ of a finite $W$ is the length of a longest chain of clusters,\footnote{~Also called \emph{depth} or \emph{rank} \cite[p. 46]{bs84}.
 } 
and the \emph{width} $w(W)$ of $W$ is the cardinality of the largest antichain; observe that the elements of an antichain come from different clusters. The \emph{local width} $lw(W)$ of $W$ is $\max\set{w(\ua{x}): x \in W}$. 
 
  A \emph{$\mu$-set} is a dense antichain of $W$. It was shown by \citet{ghi97} that all $\mu$ sets of $W$ have the same cardinality. We say that \emph{$\klam{W,\precsim}$ is of type}

\begin{description}[font=\normalfont,nosep]
\item[\emph{unitary} ($1$)] if $W$ has a $\mu$ set of cardinality $1$,
\item[\emph{finitary} ($\omega$)] if $W$ has a finite $\mu$ set with more than one element, 
\item[\emph{infinitary} ($\infty$)] if $W$ has an infinite $\mu$-set, 
\item[\emph{nullary} ($0$)] if $W$ has no $\mu$ set. 
\end{description}
If no confusion can arise we will identify algebras with their universe. An algebra $A$ is called \emph{directly indecomposable} if $A$ is not isomorphic to a direct product of two non-trivial algebras. 

A variety \Va is called 
\begin{enumerate}
\item \emph{finite} or \emph{finitely generated}, if it is generated by a finite algebra, equivalently, if it is generated by a finite set of finite algebras.
\item \emph{locally finite}, if every finitely generated $B \in \Va$ is finite.  
\end{enumerate}
It is well known that a finite variety is locally finite, see e.g. \cite[Theorem 10.16]{bs_ua}.

 A \emph{closure operator} on a Boolean algebra $\klam{B, +, \cdot, -, 0, 1}$ with natural order $\leq$ is a mapping $f\colon B \to B$ which satisfies $f(0) = 0$ and for all $a,b \in B$ 
 \begin{align*}
 &f(a+b) = f(a) + f(b), \\
 & a \leq f(a), \\
 &f(f(a)) \leq f(a).
 \end{align*}
$b \in B$ is called \emph{closed}, if $b = f(b)$. In this case, the principal ideal $\da{b}$ is closed under $f$, and the mapping $B \onto \da{b}$ defined by $p(a) \df b \cdot a$ is a homomorphism; $\da{b}$ is called the \emph{relative algebra of $B$ with respect to $b$}. Its largest element is $b$, join, meet, $0$ and the modal operator are inherited from $B$, and complements are taken relative to $b$. If $b$ and $-b$ are closed, then $B \cong \da{b} \times \da{-b}$.

The \emph{dual mapping} of $f$ is denoted by $f^\partial$  and defined by $f^\partial(a) \df -f(-a)$; such operator is called \emph{interior operator}. An element $b \in B$ is called \emph{open}, if $b = f^\partial(b)$, and \emph{clopen}, if it is closed and open.

 A \emph{closure algebra} is a structure $\klam{B,f}$, where $B$ is a Boolean algebra and $f$ is a closure operator; the pair $\klam{B,f^\partial}$ is called an \emph{interior algebra}. The set of atoms of $B$ is denoted by $\At(B)$. The variety of closure algebras is denoted by $\VaCl$. In the rest of the paper we suppose that $\Va$ is a non-trivial variety of closure algebras unless otherwise indicated. The identity is the unique closure operator on the two-element Boolean algebra with universe $\set{0,1}$, and we denote this algebra by \two. Its variety $\Eq(\two)$ is the smallest non-trivial variety of closure algebras. 

$B \in \Va$ is called \emph{projective in \Va} \tiff for every $A \in \Va$ and every surjective homomorphism $p\colon A \onto B$ there is some injective homomorphism $q\colon B \into A$ such that the composition $p \circ q$ is the identity on $B$. In this situation, $B$ is called a \emph{retract of $A$}, and $p$ is called a \emph{retraction}.  $A$ is called \emph{injective in \Va} \tiff it is a retract of each of its extensions in \Va.\footnote{~Since categorical epimorphisms in \VaCl are onto by a result of \citet{nem83} the notions of weak projectivity of \cite{bd74} and projectivity as well as of weak injectivity and injectivity coincide. Thus, epimorphisms are exactly the surjective homomorphisms in \VaCl. For the situation in varieties of Heyting algebras see \cite{mw20}.}

Note that \two is projective in \Va. If \Va is locally finite, the projectivity of a finite algebra depends only on finite algebras: 

\begin{lemma}\label{lem:fin}
Suppose that $\Va$ is locally finite and that $\B \in \Va$ is finite. Then $\B$ is projective in $\Va$ \tiff\ for every finite $A \in \Va$, every surjective homomorphism $A \onto \B$ is a retraction.
\end{lemma}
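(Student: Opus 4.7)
The forward direction is immediate: specialising the definition of projectivity to finite $A \in \Va$ shows that every surjective $p\colon A \onto \B$ in this restricted setting must split.

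For the backward direction, the plan is to reduce an arbitrary surjective homomorphism $p\colon A \onto \B$ (with $A \in \Va$ possibly infinite) to a surjection from a finite subalgebra of $A$, and then apply the hypothesis. Concretely, since $\B$ is finite, it is finitely generated; pick a finite generating set $b_1,\dots,b_n$ of $\B$, choose preimages $a_i \in A$ with $p(a_i) = b_i$, and let $A'$ be the subalgebra of $A$ generated by $\{a_1,\dots,a_n\}$. Because $\Va$ is locally finite and $A'$ is finitely generated in $\Va$, $A'$ is finite. The restriction $p' \df p\restrict A'\colon A' \to \B$ is a homomorphism whose image contains the generators $b_1,\dots,b_n$ of $\B$, hence $p'$ is surjective.

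By the hypothesis applied to $p'$, there exists an injective homomorphism $q\colon \B \into A'$ with $p' \circ q = \mathrm{id}_\B$. Composing $q$ with the inclusion $A' \into A$ yields an injective homomorphism $\B \into A$ that splits $p$, which is exactly what projectivity of $\B$ in $\Va$ demands.

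There is no serious obstacle here; the only point worth flagging is the implicit use of local finiteness together with the standard fact that a finitely generated subalgebra of a member of a locally finite variety is again in that variety (so that \emph{local finiteness} applies to $A'$ inside $\Va$). Everything else is routine once the finite subalgebra $A'$ is identified.
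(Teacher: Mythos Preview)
Your argument is correct and follows essentially the same route as the paper: reduce to a finite subalgebra $A'$ of $A$ by choosing finitely many preimages, use local finiteness to conclude $A'$ is finite, and then apply the hypothesis to the restricted surjection. The only cosmetic difference is that the paper takes preimages of every element of $B\setminus\{0,1\}$ rather than of a generating set, which is harmless since $B$ is finite.
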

\begin{proof}
\aright This is clear.

\aleft Suppose that $p\colon \A \onto \B$ is a surjective homomorphism. For every $b \in B \setminus \set{0,1}$ choose some $a_b \in p^{-1}(b)$, and let $\A'$ be the subalgebra of $\A$ generated by $\set{a_b: b \in B \setminus \set{0,1}}$. Since $\Va$ is locally finite, $\A'$ is finite. Clearly, $p' \df p\restrict \A'$ is a surjective homomorphism $\A' \onto \B$. By the hypothesis, there is some $q\colon B \into \A' \subseteq \A$ such that $p' \circ q = id(B)$, and it follows from the choice of $\A'$ that $p \circ q = id(B)$.
\end{proof}
If $A$ is a homomorphic image of $B$ we write $B \onto A$, and if $B$ is isomorphic to a subalgebra of $A$ we write $B \into A$. If $p:A \to B$ is a homomorphism, its \emph{kernel} is the set $f^{-1}(0)$, denoted by $\ker(p)$. It is well known that $\ker(p)$ is a closed ideal, i.e. $f[\ker(p)] \subseteq \ker(p)$, and that each congruence is determined by the closed ideal of elements congruent to $0$.

The \emph{canonical frame} of $\klam{B,f}$  is the structure $\Cf(\B) \df \klam{W_B, \precsim_f}$ where $W_B$ is the set of ultrafilters of $B$, and $\precsim_f$ is the binary  relation on $W_B$ defined by 
\begin{gather}\label{def:Rf}
F \mathrel{\precsim_f} G \text{ \tiff } f[G] \subseteq F.
\end{gather}
If $f$ is understood we shall usually omit the subscript. Conversely, if $\W \df \klam{W,R}$ is a frame, its \emph{complex algebra} is the structure $\Cm(\W) \df \klam{2^W, \poss{R}}$, where $2^W$ is the power set algebra of $W$ and $\poss{R}\colon 2^W \to 2^W$ is the mapping defined by 
\begin{gather*}
\poss{R}(X) \df \set{x \in W: R(x) \cap X \neq \z}.
\end{gather*}
We denote $\Cm\Cf(\B)$ by $\Em(\B)$ and call it the \emph{canonical embedding algebra} of $B$. The mapping $h\colon \B\to \Em(\B)$, defined by $h(a) \df \set{U \in W_B: a \in U}$, is an embedding, and $\B \cong \Em(\B)$ \tiff $\B$ is finite. Furthermore, $\precsim_f$ is a quasiorder, and, if $R$ is a quasiorder, then $\poss{R}$ is a closure operator \cite{jt51}. 

The following facts are decisive, see e.g. \cite[Theorem 5.47]{brv_modal}:

\begin{lemma}\label{lem:dual} Suppose that $\A, \B$ are closure algebras and $\V, \W$ are frames. Then,
\begin{enumerate}
\item If $\A \into \B$, then $\Cf(\B) \bonto \Cf(\A)$.
\item If $\A \onto \B$, then $\Cf(\B) \ginto \Cf(\A)$.
\item If $\V \ginto \W$, then $\Cm(\W) \onto \Cm(\V)$.
\item If $\V \bonto \W$, then $\Cm(\W) \into \Cm(\V)$.
\end{enumerate}
\end{lemma}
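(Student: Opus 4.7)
The plan is to exhibit, for each of the four implications, the canonical dual map and verify the required morphism condition; this is the shape of the standard Jónsson--Tarski-style arguments. Parts (3) and (4) are essentially syntactic; part (2) uses the basic ``closed ideals are kernels'' picture; and part (1) is the only genuine computation.

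For (3), if $\klam{V,S} \ginto \klam{W,R}$ then the restriction $X \mapsto X \cap V$ is a surjective Boolean homomorphism $2^W \onto 2^V$, and its commutation with $\poss{R}$ and $\poss{S}$ uses exactly the generated-subframe condition (\ref{def:gensubframe}), which ensures $R(x) \subseteq V$ whenever $x \in V$. For (4), if $p\colon \V \bonto \W$ then the preimage map $p^{-1}\colon 2^W \to 2^V$ is injective (since $p$ is onto) and Boolean, and satisfies $p^{-1}[\poss{R}(X)] = \poss{S}(p^{-1}[X])$: the forward inclusion comes from preservation of $R$ by $p$, the reverse from the back condition.

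For (2), given a surjective homomorphism $h\colon \A \onto \B$, I would define $\iota\colon W_B \to W_A$ by $\iota(G) \df h^{-1}[G]$. Surjectivity of $h$ forces $\iota$ to be injective, with image $\set{U \in W_A : \ker(h) \subseteq U}$. To see the image is a generated subframe of $\Cf(\A)$, I would use that $\ker(h)$ is a \emph{closed} ideal, already noted in the excerpt: if $\ker(h) \subseteq U$ and $U \precsim_{f_A} U'$, then $U'$ must contain $\ker(h)$ as well, since $f_A[\ker(h)] \subseteq \ker(h)$ transports membership up the quasi-order.

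The main obstacle is (1). Given an embedding $i\colon \A \into \B$, set $p(F) \df i^{-1}[F]$. Surjectivity follows by extending $i[U]$ to an ultrafilter of $\B$ for each $U \in W_A$, and monotonicity $F \precsim_{f_B} G \Rightarrow p(F) \precsim_{f_A} p(G)$ is immediate from $i \circ f_A = f_B \circ i$. The back condition demands, from $p(F) \precsim_{f_A} H$, the construction of $G \in W_B$ with $i[H] \subseteq G$ and $F \precsim_{f_B} G$. My plan is to form the filter of $B$ generated by $i[H]$ together with $\set{b \in B : f_B^\partial(b) \in F}$; the latter set is itself a filter since $f_B^\partial$ preserves $\cdot$ and $F$ is upward closed. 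Propriety of the combined filter reduces to checking $i(h) \cdot b \neq 0$ whenever $h \in H$ and $f_B^\partial(b) \in F$, and this is forced by the hypothesis $i(f_A(h)) = f_B(i(h)) \in F$: the inequality $b \leq -i(h)$ would give $f_B^\partial(b) \leq f_B^\partial(-i(h)) = -f_B(i(h)) = -i(f_A(h))$, contradicting $f_A(h) \in i^{-1}[F]$. Extending to any ultrafilter containing this filter yields the required $G$, and the remaining three parts are then routine verifications.
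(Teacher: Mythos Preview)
Your argument is the standard J\'onsson--Tarski duality verification and is structurally sound; parts (1), (3), and (4) are correct as written, including the filter construction in (1) and the implicit use of the equivalence $F \precsim_{f_B} G \Iff \set{b : f_B^\partial(b) \in F} \subseteq G$. Note, however, that the paper does not prove this lemma at all: it simply cites Theorem 5.47 of \cite{brv_modal}. So your sketch already supplies more than the paper does, and there is nothing to compare approaches against.

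There is one genuine slip in your treatment of part (2). You describe the image of $\iota$ as $\set{U \in W_A : \ker(h) \subseteq U}$, but this set is empty: $0 \in \ker(h)$ and no ultrafilter contains $0$. The correct characterization is $\set{U \in W_A : U \cap \ker(h) = \z}$, the ultrafilters disjoint from the kernel ideal. With that correction your closure argument goes through cleanly: if $U \cap \ker(h) = \z$, $U \precsim_{f_A} U'$, and $a \in U' \cap \ker(h)$, then $f_A(a) \in \ker(h)$ (closed ideal) and $f_A(a) \in f_A[U'] \subseteq U$, contradicting $U \cap \ker(h) = \z$. You also omit the verification that $\iota$ reflects the order (so that $\Cf(\B)$ is isomorphic to its image, not merely mapped into it), but this is routine from surjectivity of $h$.
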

We shall use this duality throughout without further reference. If the structures considered are finite, then $\Em(\B) \cong \B$, $\Cf\Cm(\W) \cong \W$, and the implications above are, in fact, equivalences. In this situation we will work either with frames or algebras, depending on which way is more transparent. Such procedure shows the fruitful interaction of algebraic and relational semantics. Since the ultrafilters of a finite algebra $A$ correspond to its atoms we will usually identify an ultrafilter with the atom which generates it, so that $\Cf(A) = \At(A)$. In this way, $\precsim$ becomes a quasiorder on $\At(A)$ and \eqref{def:Rf} becomes
\begin{gather}\label{ordframe}
a \precsim_f b \text{ \tiff } \ a \leq f(b).
\end{gather}
Thus, for $a \in \At(A)$, $f(a) = \sum\set{b \in \At(A): b \precsim a}$. 

By the remarks on p. \pageref{injective}, the canonical frame of a finite projective algebra is injective, and the complex algebra of an injective finite frame is projective in the respective category. 

For unexplained notation and concepts the reader is invited to consult the standard textbooks by \citet{kop89} for Boolean algebras, \citet{bs_ua} for universal algebra, and \citet{brv_modal} for modal logic.

\section{Algebraic unification}\label{sec:alguni} 
We shall briefly describe the concept of algebraic unification presented by \citet{ghi97} as applicable to locally finite varieties.\footnote{~\citeauthor{ghi97} considers ``finitely presented'' instead of ``finite'' algebras, but in locally finite varieties these classes coincide.} For additional background we invite the reader to consult the notes by \mbox{\citet{bur01}} for a concise introduction to E-unification including examples.

Let $A \in \Va$ be finite. A \emph{unifier} of $A$ is a pair $\klam{u,B}$ where $B$ is finite  and projective in $\Va$, and $u\colon A \to B$ is a homomorphism.%
\footnote{~Strictly speaking we should define a unifier for $A \in \Va$ as a triple $\klam{A,u,B}$; we omit $A$ because we consider only unifiers of a fixed $A$.}
The collection of unifiers of $A$ with respect to \Va is denoted by $U_A^\Va$. $A$ is \emph{unifiable}, if $U_A^\Va \neq \z$. If \Va is understood, we omit the superscript. We remark in passing that a closure algebra $A$ is unifiable \tiff \two is a homomorphic image of $A$, see \cite{cit18}, hence, every non-trivial closure algebra is unifiable.

Given two unifiers $\klam{u,B}$ and $\klam{v,C}$ of $A$, we say that \emph{$\klam{u,B}$ is more general  than $\klam{v,C}$},%
\footnote{~The quasiorder on $U_A$ is not uniformly defined in the literature. We chose $\succcurlyeq$ to be consistent with \cite{ghi97} and the $\mu$-sets of Section \ref{sec:def}
} 
 written as $\klam{u,B}\succcurlyeq \klam{v,C}$, if there is a homomorphism $h\colon B \to C$ such that the diagram in Figure \ref{fig:orduni} commutes, i.e. that $v = h \circ u$.
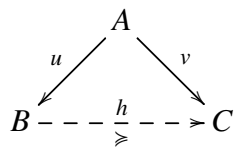
\begin{figure}[htb]
\caption{Quasiordering algebraic unifiers}\label{fig:orduni}
$$
\xymatrix{
& {A} \ar[ld]_{u} \ar[rd]^{v} \\
{B} \ar@{-->}[rr]^{h}_{\succcurlyeq} && {C} }
$$
\end{figure}
We denote the converse of $\succcurlyeq$ by $\preccurlyeq$; clearly, both relations are quasiorders. 

\noindent
If $\klam{u,B} \preccurlyeq \klam{v,C}$ and  $\klam{u,B} \succcurlyeq \klam{v,C}$ we write $\klam{u,B} \approx \klam{v,C}$, The relation $\approx$ is an equivalence relation on $U_A$, and $U_A/\approx$ can be partially ordered as described in Section \ref{sec:def}. We say that the \emph{unification type of $A$} is the unification type of the partially ordered set  $\klam{U_A/\approx}$, denoted by $t(A)$. 

By the homomorphism theorem a unifier $\klam{u,B}$ of $A$ is determined by the closed ideal $\ker(u)$ with associated congruence $\theta$, its canonical surjective homomorphism $p_\theta\colon A \onto A/\theta$,  and an embedding $e$ into $B$:
\begin{gather}\label{decomp}
\xymatrix{
& {A} \ar@{->>}[ld]_{p_\theta} \ar[rd]^{u} \\
{A/\theta}~ \ar@{>->}[rr]^{e} && {B} }
\end{gather}
In this sense, we can think of a unifier of $A$ as a triple $\klam{\theta, e, B}$ where $\theta$ is a congruence on $A$, $B$ is a finite algebra projective in $\Va$, and $e\colon A/\theta \into B$ is an embedding. Note that $A/\theta$ need not be projective, but only needs to be embeddable into a projective algebra. 
We denote by $\C(A)$ the set of  congruences of $A$ for which $A/\theta$ is isomorphic to a subalgebra of some algebra projective in \Va, also called \emph{admissible congruences}.  Using this decomposition we depict $\klam{u,B} \succcurlyeq \klam{v,C}$ in Figure \ref{fig:orduni2}. 
\begin{figure}[htb]
\caption{Quasiordering algebraic unifiers using quotients}\label{fig:orduni2}
$$
\xymatrix{
& {A} \ar[ld]_{p_\theta} \ar[rd]^{p_\psi} \\
A/\theta \ar[d]_{e_\theta} & & A/\psi \ar[d]_{e_\psi} \\
{B} \ar@{-->}[rr]^{h}_{\succcurlyeq} && {C} 
}
$$
\end{figure}
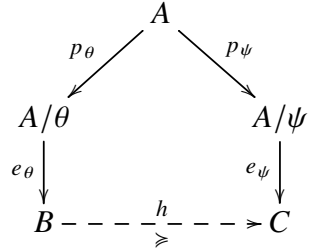

Below we collect some simple properties of unifiers which we shall use later on. 

\begin{lemma}\label{lem:retract}  
\begin{enumerate}
\item Suppose that $\klam{u,B}, \klam{v,C}$ are  unifiers of $A$. If $\klam{u,B} \succcurlyeq \klam{v,C}$, then $\ker(u) \subseteq \ker(v)$. Consequently, 
\begin{enumerate}
    \item If $\ker(u)$ and $\ker(v)$ are incomparable with respect to $\subseteq$, then $\klam{u,B}$ and $\klam{v,C}$ are incomparable with respect to $\succcurlyeq$.
    \item $\klam{u,B} \approx \klam{v,C}$ implies $\ker(u) = \ker(v)$.
\end{enumerate}
 \item If $\klam{u,B} \in U_A$, $C$ is projective in \Va,  and $B \leq C$ is a retract of $C$ with $i\colon B \into C$ the identity embedding, then, $\klam{u,B} \approx \klam{i \circ u,C}$.
\end{enumerate}
\end{lemma}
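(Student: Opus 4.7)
The plan is to unpack the definition of $\succcurlyeq$ and apply the homomorphism properties directly; both parts are essentially diagram-chases.

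For part (1), I would start with the assumption $\langle u,B\rangle \succcurlyeq \langle v,C\rangle$, which by definition gives a homomorphism $h\colon B\to C$ with $v = h\circ u$. Since any closure-algebra homomorphism sends $0$ to $0$, for any $a\in\ker(u)$ we have $v(a) = h(u(a)) = h(0) = 0$, so $a\in\ker(v)$; hence $\ker(u)\subseteq\ker(v)$. For (1a), if $\ker(u),\ker(v)$ were $\subseteq$-incomparable but $\langle u,B\rangle$ and $\langle v,C\rangle$ were $\succcurlyeq$-comparable, then applying the first part in whichever direction holds would produce an inclusion between the kernels, a contradiction. For (1b), $\langle u,B\rangle\approx\langle v,C\rangle$ means $\succcurlyeq$ holds both ways, so the first part yields $\ker(u)\subseteq\ker(v)$ and $\ker(v)\subseteq\ker(u)$.

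For part (2), I need to verify both $\langle u,B\rangle\succcurlyeq\langle i\circ u,C\rangle$ and the converse. The forward direction is trivial: take $h\mathrel{:=}i\colon B\to C$ in the defining diagram, so that $i\circ u = h\circ u$ trivially commutes. For the converse, I would use that $B$ is a retract of $C$ to obtain a retraction $r\colon C\onto B$ with $r\circ i = \mathrm{id}_B$; then $r\circ(i\circ u) = (r\circ i)\circ u = u$, which is exactly the commutativity condition witnessing $\langle i\circ u,C\rangle\succcurlyeq\langle u,B\rangle$. One should also note that $\langle i\circ u,C\rangle$ is indeed a unifier: $C$ is projective by hypothesis, and $i\circ u\colon A\to C$ is a homomorphism as the composite of two homomorphisms.

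There is no real obstacle here — the statement follows immediately once the definition of the quasiorder on unifiers is paired with the fact that homomorphisms preserve $0$ (for part 1) and with the defining equation $r\circ i = \mathrm{id}_B$ of a retract (for part 2). The lemma is purely a bookkeeping step, packaging facts that will be invoked repeatedly later when comparing unifiers via their associated congruences as in diagram~\eqref{decomp}.
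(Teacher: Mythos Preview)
Your proposal is correct and follows essentially the same approach as the paper: both argue part~(1) by composing with the witnessing homomorphism $h$ and using $h(0)=0$, and both argue part~(2) by taking $i$ itself as the witness for one direction and the retraction $C\onto B$ for the other. Your write-up is slightly more explicit (spelling out (1a), (1b), and why $\klam{i\circ u,C}$ is a unifier), but there is no substantive difference.
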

\begin{proof}
1. Suppose that $h\colon B \to C$ with $h \circ u = v$. If $u(x) = 0$, then $h(u(x)) = v(x) = 0$. 

2. Let  $p\colon C \onto B$ be a retraction:
$$
\xymatrix{
& A \ar[ld]_{u} \ar[rd]^{i \circ u} & \\
B  \ar@/^/[rr]|i
&& C \ar@/^/[ll]|{p} }
$$
Since $i(u(x)) = u(x)$ we have $\klam{u,B} \succcurlyeq \klam{i \circ u,C}$. For the converse, let $x \in A$; then, $u(x) \in B \leq C$, and $p(i(u(x))) = u(x)$, since $p \restrict B$ is the identity.
\end{proof}

Even $u[A] \cong v[A]$ does not imply that $\klam{u,B} \succcurlyeq \klam{v,B}$:

\begin{example}\label{ex:2}
Let $A \in \Va$ and $F,G$ be different closed prime ideals of $A$, and $p_F\colon A \onto A/F$ and $p_G\colon A \onto A/G$ be the canonical surjective homomorphisms; then, $A/F =  A/G = \two$. If $a \in F \setminus G$, then $0 = p_F(a) \neq p_G(a) = 1$ which shows that there is no homomorphism $h\colon \two \to \two$ such that $h \circ p_F = p_G$.
\QED
\end{example}
Generalizing unifiers to varieties, we say that \Va has unification type
\begin{description}[font=\normalfont,nosep]
\item[\emph{unitary}] if every unifiable $A \in \Va$ has type $1$.
\item[\emph{finitary}] if every unifiable $A \in \Va$ has type $1$ or type $\omega$, and there is some unifiable $A \in \Va$ with type $\omega$,
\item[\emph{infinitary}] if every unifiable $A \in \Va$ has type $1, \omega$ or $\infty$, and there is some unifiable $A \in \Va$ with type $\infty$,
\item[\emph{nullary}] if there is some $A \in \Va$ with type $0$.
\end{description}
The unification type of $\Va$ is denoted by $t(\Va)$. If we order unification types by $1 \lneq \omega \lneq \infty \lneq 0$, we see that $t(\Va) = \max\set{t(A): A \in \Va}$, see e.g. \cite[Definition 3.4]{bs01}.

There is another kind of unification which we will use in Section \ref{sec:W}: Unification of $A$ is called \emph{filtering} \cite{gs04}, if $U_A$ is directed with respect to $\preccurlyeq$, that is, for every two unifiers $\klam{u_1,B_1}$ and $\klam{u_2,B_2}$ of $A$ there is a unifier of $A$ more general than both of them. Note that if the unification of $A$ is filtering, then the unification type of $A$ is unitary or nullary. We say that unification in a variety \Va is filtering, if unification is filtering for each unifiable $A \in \Va$. \citet{gs04} have provided an algebraic characterization for unification  to be filtering:
\begin{theorem}\label{thm:filter}\cite[Theorem 3.2]{gs04} 
Unification in $\Va$ is filtering \tiff the product of any two finite  projective algebras in \Va is projective in \Va.
\end{theorem}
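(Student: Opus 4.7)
The plan is to use the projection/pairing construction for products, which makes both directions essentially formal manipulations with the universal property of the product.

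\emph{Sufficiency.} Assume the product of any two finite projective algebras in $\Va$ is projective in $\Va$. Let $A\in\Va$ be unifiable and let $\klam{u_1,B_1},\klam{u_2,B_2}\in U_A$. Set $B \df B_1\times B_2$ (finite, and projective by hypothesis), and define $u\colon A\to B$ by $u(a)\df\klam{u_1(a),u_2(a)}$. Then $\klam{u,B}\in U_A$. The two projections $\pi_i\colon B\to B_i$ witness $\klam{u,B}\succcurlyeq\klam{u_i,B_i}$ for $i=1,2$, since $\pi_i\circ u = u_i$. Hence $U_A$ is $\preccurlyeq$-directed, so unification in $\Va$ is filtering.

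\emph{Necessity.} Conversely, assume unification in $\Va$ is filtering and let $P_1,P_2\in\Va$ be finite projective. Put $A\df P_1\times P_2$; we want to show that $A$ is projective in $\Va$. The projections $\pi_i\colon A\onto P_i$ together with the $P_i$ are unifiers of $A$ (each $P_i$ is finite projective, and $A$ is unifiable because $P_1$ already provides a unifier). By the filtering hypothesis there is a unifier $\klam{u,B}$ with $\klam{u,B}\succcurlyeq\klam{\pi_i,P_i}$ for both $i=1,2$; that is, there are homomorphisms $h_i\colon B\to P_i$ with $h_i\circ u=\pi_i$. Pair them: $\klam{h_1,h_2}\colon B\to P_1\times P_2 = A$ satisfies
\[
\klam{h_1,h_2}\circ u = \klam{h_1\circ u,\, h_2\circ u} = \klam{\pi_1,\pi_2} = \mathrm{id}_A.
\]
Therefore $u$ is a split monomorphism, so $A$ is a retract of the projective algebra $B$, and hence is itself projective in $\Va$.

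I do not expect a genuine obstacle here. The only mildly delicate points are (i) checking that $u$ defined coordinatewise really is a $\Va$-homomorphism (immediate from the definition of the product), and (ii) being careful that ``more general'' in the direction of the proof matches the convention $\succcurlyeq$ fixed in the paper; with the convention $\pi_i = h_i\circ u$, the pairing $\klam{h_1,h_2}$ is exactly the retraction needed.
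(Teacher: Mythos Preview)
Your argument is correct and is the standard proof of this result. Note, however, that the paper does not give its own proof of this theorem: it is simply quoted from \cite[Theorem~3.2]{gs04} and used as a black box, so there is no in-paper proof to compare against. Your sufficiency direction (pairing two unifiers into the product and using the projections) and your necessity direction (applying filtering to the two projections of $P_1\times P_2$ and then pairing the witnessing maps to obtain a retraction) are exactly the argument one finds in the cited source; the only point worth making explicit is that $A=P_1\times P_2$ is finite, hence a legitimate object to unify in the paper's locally finite setting, and that retracts of projectives are projective --- both of which you handle correctly.
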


\section{The fork}

The \emph{$2$-pronged fork}, or simply \emph{fork}, is the frame $\FF$ shown in Figure \ref{fig:fork}. Its complex algebra is denoted by $\BF$, and the variety it generates by $\Eq(\BF)$; since $\Eq(\BF)$ is a finite variety, it is locally finite.  A \emph{fork algebra} is a nontrivial finite algebra in $\Eq(\BF)$.\footnote{~These should not be confused with the fork algebras of \citet{Frias2002} which are a definitional extension of relation algebras.} A \emph{fork frame} has the form $\Cf(B)$ for a fork algebra $B$. \citet[Theorem 5.7]{avbb2003} have shown that the variety $\Eq(\BF)$ is the variety of closure algebras which is characterized by the axioms
\begin{xalignat}{2}
& f^\partial(f(x \cdot f(-x))+x) \leq x, \tag{\textbf{Grz}} \label{grz} \\
&-x \cdot f(x) \leq f(f^\partial(x)), \tag{$\mathbf{BD_2}$} \label{bd2} \\
& -(x \cdot y \cdot f(x \cdot -y) \cdot f(-x \cdot y) \cdot f(- x \cdot -y))= 1. \tag{$\mathbf{BW_2}$} \label{bw2}
\end{xalignat}
The axiom \textbf{Grz} implies that a fork frame is a partial order \cite{bs84}, which we will denote by $\precsim$. The first order frame conditions corresponding to \eqref{bd2}, respectively, to \eqref{bw2} are
\footnote{~Computed by SQEMA \cite{Georgiev2006}.}
\begin{gather}\label{bd2R}
\forall y(x \precsim y \Implies (x = y \tor  \exists z_1(x \precsim z_1 \tand \forall z_2(z_1 \precsim z_2 \Implies y = z_2))))
\end{gather}
and
\begin{multline}\label{bw2R}
  \forall y_1 \forall y_2((x \precsim y_1 \tand  x \precsim y_2) \Implies (x = y_1 \tor  x = y_2 \tor  y_1 = y_2 \tor \\
  \forall z_1(x \precsim z_1 \Implies (x = z_1 \tor  y_1 = z_1 \tor  y_2 = z_1)))).
\end{multline}
Together, \eqref{bd2R} says that the height of a fork frame $W$ is at most two, and \eqref{bw2R}  says that the local width of $W$ is also at most two, that is,   every  $x \in W$ is related to at most two other elements. Together they imply that a rooted fork frame has one of the following forms:

$$
\xymatrix{
 \bullet & \bullet  & \bullet  & & \bullet \\
& \bullet \ar@{->}[u] & & \bullet\ar@{->}[lu]  \ar@{->}[ru]
}
$$

We denote by $L_W^1$ the points on the lower level and by $L_W^2$ the points on the upper level of a fork frame $W$. The points in $L_W^1$ correspond to the closed atoms of $\Cm(W)$, whereas $L_W^2$ corresponds to the set of its non-closed atoms. Algebraically, 
\begin{gather}
L_{\At(A)}^1 = \set{a \in \At(A): a = f(a)}, \quad L_{\At(A)}^2 = \set{a \in \At(A): a \lneq f(a)}. 
\end{gather}

For later use we mention the following observations:
 
 \begin{lemma}\label{lem:atom}
Suppose that $A$ is a fork algebra, 
\begin{enumerate}
\item If $a \in \At(A)$, then every element below $f(a)$ different from $a$ is closed.
\item Every atom of $A$ is open or closed.
\end{enumerate}
\end{lemma}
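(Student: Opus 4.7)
The plan is to work in the dual finite partial order $(\At(A),\precsim)$ given by \eqref{ordframe}, which by \eqref{grz}, \eqref{bd2} and \eqref{bw2} is a finite poset of height at most $2$ and local width at most $2$. Two facts will be used: the formula $f(a)=\sum\set{b\in\At(A):b\precsim a}$, so that an atom is closed precisely when it is minimal in $\precsim$; and the frame translation of \eqref{bd2}, saying that $(\At(A),\precsim)$ contains no strict chain of three distinct elements.

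For (1), I read the claim at the atomic level: every atom $b\leq f(a)$ with $b\neq a$ is closed. Such a $b$ satisfies $b\precsim a$ and $b\neq a$, hence $b\precnsim a$. If $b$ were not minimal, some atom $c$ with $c\precnsim b$ would yield a forbidden chain $c\precnsim b\precnsim a$. So $b$ is minimal and therefore closed.

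For (2), let $a\in\At(A)$. If $a$ were neither minimal nor maximal, atoms $b,c$ with $b\precnsim a\precnsim c$ would again give a forbidden three-chain; so $a$ is minimal or maximal. If $a$ is minimal, $f(a)=a$ and $a$ is closed. If $a$ is maximal, then for every atom $b\neq a$ we have $a\not\precsim b$, equivalently $a\not\leq f(b)$; since $-a=\sum\set{b\in\At(A):b\neq a}$ and $f$ is additive, $f(-a)=\sum\set{f(b):b\in\At(A),b\neq a}\leq -a$, so $-a$ is closed and $a$ is open.

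I do not expect a serious obstacle. The step most easily mishandled is reading ``element below $f(a)$'' in (1) at the atomic level --- where the claim is clean --- and keeping track of the dictionary ``closed $\Leftrightarrow$ minimal'' and ``open $\Leftrightarrow$ maximal'' on atoms, both of which are immediate from \eqref{ordframe}.
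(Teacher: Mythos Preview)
Your argument for (1) coincides with the paper's: both reduce to atoms, observe that $b\precsim a$ with $b\neq a$ puts $b$ on the lower level of $\Cf(A)$ (equivalently, a strict predecessor of $b$ would produce a forbidden $3$-chain), and conclude $b=f(b)$. For (2) you take a genuinely different route. The paper argues algebraically straight from \eqref{bd2}: if an atom $a$ is neither open nor closed then $f^\partial(a)=0$ (since $f^\partial(a)\leq a$ and $f^\partial(a)\neq a$) while $-a\cdot f(a)\neq 0$, yet \eqref{bd2} gives $-a\cdot f(a)\leq f(f^\partial(a))=f(0)=0$, a contradiction. You instead stay on the frame side: height~$2$ forces every atom to be minimal or maximal, and you translate ``maximal'' back to ``open'' via $f(-a)=\sum_{b\neq a}f(b)\leq -a$. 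Both are correct; the paper's version is a two-line direct appeal to the axiom, while yours keeps the entire lemma uniformly in the dual picture and makes the ``closed $\Leftrightarrow$ minimal, open $\Leftrightarrow$ maximal'' dictionary explicit.

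Your caution about reading (1) ``at the atomic level'' is well placed. The paper also only proves the atomic case, asserting that this suffices; in fact the literal statement for arbitrary elements fails (in $\BW$ take $a=v$ and the element $u+v\leq f(v)$, which is not closed since $f(u+v)=u+u'+v$). This does no damage, since the later applications of the lemma are to atoms or to sums of atoms below $f(a)$ not containing $a$, where the atomic case genuinely suffices.
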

\begin{proof}
1. It is sufficient to show that every atom below $f(a)$ and different from $a$ is closed. Suppose that $b \in \At(A), a \neq b$ and $b \leq f(a)$. Then, $b \precsim a$ by \eqref{ordframe}, and $a \neq b$ implies that $b \in L_{\Cf(A)}^1$. It follows that $b = f(b)$.

2. If $a$ is not closed and not open, then $f^\partial(a) = 0$. Since $a$ is not closed, $-a \cdot f(a) \neq 0$. On the other hand, $-a \cdot f(a) \leq f(f^\partial(a)) = 0$ by \eqref{bd2}, a contradiction.
\end{proof}

\section{Projective fork algebras}\label{sec:forkPROJ}

Our first result gives a necessary condition for a fork algebra to be projective. We prove a slightly more general result, extending the $2$-fork to an $m$-pronged fork. Let $F_{2,m}$ be the class of finite  partial orders $\klam{P,\precsim}$ of height $2$ and local width $m \geq 2$, and $\Va_{2,m}$ be its associated variety. The levels of $\klam{P,\precsim}$ are defined as for fork frames; this makes sense, since both kinds have height two. 

\begin{theorem}
Suppose that $B \in \Va_{2,m}$ is finite and projective. Then, $B$ is directly indecomposable, and  $f(a_1) \cdot \ldots \cdot f(a_k) \neq 0$ for all non-closed atoms $a_1, \ldots a_k$, when $k \leq m$.
\end{theorem}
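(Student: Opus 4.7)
The plan is to argue the contrapositive via the frame-side duality. By Lemma~\ref{lem:fin} and Lemma~\ref{lem:dual}, a finite $B \in \Va_{2,m}$ is projective precisely when $\Cf(B)$ is injective with respect to finite posets in $F_{2,m}$: for every generated subframe embedding $\Cf(B) \ginto \V$ with $\V$ finite in $F_{2,m}$, there is a surjective bounded morphism $\V \onto \Cf(B)$ acting as the identity on $\Cf(B)$. To prove each claim I assume the conclusion fails and build a finite $\V \in F_{2,m}$ containing $\Cf(B)$ as a generated subframe but admitting no such retraction, contradicting projectivity.

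The construction is the same in both cases: adjoin a single new point $x$ to $\Cf(B)$ and declare $x \precsim y$ for each $y$ in a prescribed set $S$ of \emph{maximal} elements of $\Cf(B)$. Because every $y \in S$ is maximal, no chain in $\V$ has length greater than two; $\ua x = \{x\} \cup S$ has antichain width $|S|$, so $lw(\V) \le \max(lw(\Cf(B)),|S|) \le m$ provided $|S| \le m$; and upsets of old points are inherited from $\Cf(B)$, which is therefore upward closed in $\V$ and hence a generated subframe. Any bounded retraction $p\colon \V \onto \Cf(B)$ fixing $\Cf(B)$ pointwise must, by the forth condition applied to each $x \precsim_\V y$ with $y \in S$, furnish a point $p(x) \in \Cf(B)$ satisfying $p(x) \precsim y$ for every $y \in S$; that is, $p(x)$ is a common lower bound for $S$ in $\Cf(B)$.

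For direct indecomposability, assume $\Cf(B) = \W_1 \sqcup \W_2$ as a disjoint union of two non-empty subframes and pick a maximal element $y_i \in \W_i$; taking $S = \{y_1,y_2\}$ with $|S|=2\le m$, no element of $\Cf(B)$ lies below both $y_1$ and $y_2$, so the retraction cannot exist. For the meet condition, suppose distinct non-closed atoms $a_1,\dots,a_k$ of $B$ satisfy $k \le m$ and $f(a_1)\cdots f(a_k) = 0$; by \eqref{ordframe} the $a_i$ correspond to upper-level (hence maximal) points $y_1,\dots,y_k$ of $\Cf(B)$, and the product being $0$ is exactly the statement that $\bigcap_i \da{y_i} = \z$, so $S = \{y_1,\dots,y_k\}$ has no common lower bound and again no retraction exists. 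The only delicate step is placing $x$ below maximal elements so that $\V$ keeps height two; once that is arranged, the forth condition immediately produces the required contradiction.
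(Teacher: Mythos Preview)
Your argument is correct and follows the same route as the paper: pass to the dual frame, adjoin a fresh point $x$ below a chosen set $S$ of maximal points, observe that the enlarged poset still lies in $F_{2,m}$, and use the forth (order-preserving) condition to force $p(x)$ to be a common lower bound of $S$, which cannot exist. Your write-up is in fact slightly more explicit than the paper's in verifying the local-width bound $lw(\V)\le m$ and in isolating the single construction that handles both conclusions.
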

\begin{proof}
We shall prove the dual statement. Suppose that $\klam{V, \precsim_V}$ is the canonical frame of $B$, and assume that there are $v_1, \ldots, v_k \in L_V^2$ such that $\da{v_1} \cap \ldots \cap \da{v_k} = \z$. Choose some $x \not\in V$, and set $W \df V \cup \set{x}$, $\precsim_W \df \precsim_V \cup \set{\klam{x,v_i}: 1 \leq i \leq k} \cup \set{\klam{x,x}}$; then, $\ua_W{V} \subseteq V$, and therefore $V$ is a generated substructure of $W$. Since the height of $V$ is two, adding $x$ as above does not increase the height, hence, $\Cm(W) \in \Va_{2,m}$. Suppose that $p\colon W \bonto V$ is a bounded retraction, and $p(x) = y$. Since $p$ preserves the order, $y \precsim_V v_i = p(v_i)$ for $1 \leq i \leq k$ which contradicts the hypothesis.

Assume that $V_1, V_2$ are different connected components of $V$, and let $y_i \in V_i$ be maximal. Choose some $x \not\in V$, and set $W \df V \cup \set{x}$, $\precsim_W\df \precsim_V \cup\set{\klam{x,x}, \klam{x,y_1}, \klam{x,y_2}}$. Since $m \geq 2$, height and width of $V$ are not increased, and we can proceed as above to arrive at a contradiction.
 \end{proof}

Since $\Eq(\BF) = \Va_{2,2}$, we obtain
\begin{theorem}\label{thm:projnec}
If $\klam{B,f} \in \Eq(\BF)$ is finite and projective, then $B$ is directly indecomposable and   $f(a) \cdot f(b) \neq 0$ for all non-closed atoms $a,b \in B$. 
\end{theorem}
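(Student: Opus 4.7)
This statement is the specialization of the immediately preceding general theorem to $m=2$, since $\Eq(\BF)=\Va_{2,2}$: the case $k=1$ yields direct indecomposability, and the case $k=2$ yields the product condition $f(a)\cdot f(b)\neq 0$ for non-closed atoms $a,b$. Formally the proof is therefore a one-line citation of that theorem, and I would write it as such. For transparency I nevertheless record how the two semantic arguments look when specialized to the fork setting.

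Passing to the dual, since $B$ is finite and projective in the finite (hence locally finite) variety $\Eq(\BF)$, the canonical frame $\Cf(B)=\klam{V,\precsim}$ is injective in the class $F_{2,2}$ of finite fork frames. A non-closed atom $a$ of $B$ corresponds to a $\precsim$-maximal point $v_a\in L_V^2$, and $f(a)$ corresponds to the down-set $\da{v_a}$, so $f(a)\cdot f(b)\neq 0$ is equivalent to $\da{v_a}\cap\da{v_b}\neq\emptyset$.

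I would argue each part contrapositively by adjoining a single new point $x\notin V$. If $V$ split into distinct connected components $V_1,V_2$, pick maximal $y_i\in V_i$ and form $W\df V\cup\set{x}$ with $\precsim_W\df\precsim_V\cup\set{\klam{x,x},\klam{x,y_1},\klam{x,y_2}}$; if instead $f(a)\cdot f(b)=0$ for non-closed atoms $a,b$, adjoin $x$ below $v_a$ and $v_b$ in the same fashion. In both cases $W\in F_{2,2}$ (the new point $x$ sits strictly below two already $\precsim$-maximal elements, so neither height nor local width increases) and $V\ginto W$. Injectivity of $V$ then furnishes a bounded retraction $p\colon W\bonto V$ fixing $V$ pointwise, whence $p(x)$ is a common $\precsim_V$-predecessor of $y_1,y_2$ (respectively of $v_a,v_b$), contradicting the choice of components (respectively the empty intersection).

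The only care point in this plan is the verification that $W$ still has height $\leq 2$ and local width $\leq 2$; this is routine because $x$ has no strict $\precsim_W$-predecessor and its two strict successors are $\precsim$-maximal, so every chain through $x$ has length $\leq 2$ and $\ua{z}$ retains antichain width $\leq 2$ for each $z\in W$. I do not anticipate any deeper obstacle.
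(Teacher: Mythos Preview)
Your proposal is correct and matches the paper's approach exactly: the paper's proof of this theorem is literally the one-line observation that $\Eq(\BF)=\Va_{2,2}$, so the result is the $m=2$ instance of the preceding general theorem, and your unpacking of the dual frame argument reproduces that theorem's proof verbatim in the special case. One small slip: direct indecomposability is a \emph{separate} conclusion of the general theorem, established by its own component-joining argument (which you reproduce correctly), not the ``$k=1$ case'' of the product condition --- the $k=1$ condition $f(a_1)\neq 0$ is trivial and says nothing about decomposability.
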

We now show that the conditions of Theorem \ref{thm:projnec} are sufficient for projectivity.

\begin{theorem}\label{thm:proj}
Suppose that $\klam{B,f} \in \Eq(\BF)$ is finite and directly indecomposable, and that $f(a) \cdot f(b) \neq 0$ for all non-closed atoms $a,b \in B$. Then, $B$ is projective in $\Eq(\BF)$.
\end{theorem}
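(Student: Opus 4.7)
The plan is to argue via duality. By Lemmas~\ref{lem:fin} and~\ref{lem:dual}, projectivity of $B$ is equivalent to the assertion that $V \df \Cf(B)$ is injective among finite fork frames; so I would assume $V \ginto W$ for an arbitrary finite fork frame $W$ and build a surjective bounded morphism $p\colon W \onto V$ extending the identity on $V$, which dualizes to a section $B \into A$ of the given surjection $A \onto B$.

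The trivial case $|V|=1$ gives $B \cong \two$, which is already projective. Otherwise connectedness of $V$ (the frame-side reading of direct indecomposability), together with height $\leq 2$ and $|V| \geq 2$, forces $L_V^2 \neq \emptyset$; I fix some $v^* \in L_V^2$. I then put $p \restrict V$ equal to the identity and $p(x) = v^*$ for every $x \in W \setminus V$ lying in $L_W^2$; and for each $x \in W \setminus V$ of level~1 I examine $T_x \df p(\uaP{x}) \subseteq L_V^2$, which has at most two elements by the local-width bound. If $|T_x| \leq 1$ I set $p(x)$ equal to the unique element of $T_x$ (or to $v^*$ when $T_x = \emptyset$); if $T_x = \{v_1, v_2\}$ with $v_1 \neq v_2$ I appeal to the hypothesis $f(v_1) \cdot f(v_2) \neq 0$ to choose $c \in L_V^1$ with $c \precsim v_1$ and $c \precsim v_2$, and put $p(x) = c$.

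Two things then need to be checked. First, every element of $T_x$ must actually be a non-closed atom of $B$ in order for the hypothesis to apply; this is the technical heart of the proof. Any $y \in V$ that is a strict $W$-successor of some $x \in W \setminus V$ is maximal in $W$ (since the height is at most two) and hence maximal in $V$, so were $y \in L_V^1$ it would be $\precsim$-isolated inside $V$, contradicting connectedness. Second, $p$ must be a bounded morphism: order-preservation is immediate from the case distinction, and the back condition reduces to the equality $\ua{p(x)} = \{p(x)\} \cup T_x$ in $V$, which in the two-element subcase is secured by the local-width axiom \eqref{bw2R}---it forbids a third maximal element above $c$, so $\ua{c} = \{c, v_1, v_2\}$ exactly. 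I expect this last point, interleaving the common-lower-bound hypothesis with the local-width axiom to pin down $\ua{c}$, to be the main obstacle; surjectivity of $p$ is immediate from $p \restrict V$ being the identity.
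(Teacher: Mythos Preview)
Your argument is correct. Both your proof and the paper's work by duality, setting $p$ to the identity on $V$ and extending over $W\setminus V$; the difference is in how the extension is organised. The paper first reduces to $W$ connected and then splits $L_W^1\setminus V$ into three sets $W_1,W_2,W_3$ according to whether $\uaP{x}$ lies in $V$, meets both $V$ and its complement, or misses $V$; in the mixed case $W_2$ it partitions further via the sets $X_y$ and chooses, for each $y\in L_W^2\setminus V$, a value $p(y)$ depending on the $V$-successors of the points below $y$. You avoid all of this by committing to a single $v^*\in L_V^2$ for every upper-level point outside $V$, which collapses the paper's three-way split into a uniform computation of $T_x=p(\uaP{x})$ and a case distinction only on $|T_x|$. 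Your approach is shorter and more transparent; the paper's construction has the minor feature that its choices for $p(y)$ stay ``close'' to $y$ in the frame, but this is not needed for the result and does not obviously help with the $m$-pronged generalisation either, since your argument goes through verbatim for $\Va_{2,m}$ once the $m$-fold intersection hypothesis is available.
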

\begin{proof}
Suppose \wlg that $B \neq \two$. We will use duality, and set $\klam{V, \precsim} \df \Cf(B)$; furthermore, we suppose that $V$ is a generated substructure of a fork frame $W$. Since $V$ is connected, it is contained in a component of $W$, and by mapping all points of $W$ outside this component to a maximal point of $V$, we may suppose \wlg that $W$ itself is connected; in particular, for all $x \in L_W^1$ there is some $y \in L_W^2$ such that $x \precsim y$. 

We will construct a bounded epimorphism $p\colon W \onto V$ by cases. It suffices to show that $p$ preserves $\precsim$ and satisfies the back condition on $\ua{x}$ for $x \in L_W^1$. Let $p$ be the identity on $V$. We divide  $L_W^1 \setminus V$ into three disjoint (possibly empty) sets:
\begin{align}
W_1 &\df \set{x \in L_W^1 \setminus V: \uaP{x} \subseteq V}, \\
W_2 &\df \set{x \in L_W^1 \setminus V: \uaP{x} \cap V \neq \z, \uaP{x} \cap W \setminus V \neq \z}, \\
W_3 &\df \set{x \in L_W^1 \setminus V: \uaP{x} \cap V = \z}.
\end{align}

\begin{enumerate}
\item $x \in W_1$: Here, we consider two cases:
\begin{enumerate}
\item $\uaP{x} = \set{v}$: Set $p(x) \df v$. Then, $p[\ua{x}] = \set{v}$ and clearly, $p \restrict \ua{x}$ is a bounded morphism.
\item $\uaP{x} = \set{v,w}$, $v \neq w$: Choose some $u \in \da{v} \cap \da{w} \cap V$, and set $p(x) \df u$; such $u$ exists by the hypothesis. Then, the diagram
$$
\xymatrix{
& v  &  & w \\
x \ar@{->}[ru] \ar@{->}[rrru] \ar@{-->}[rr]_p& & u \in V \ar@{->}[lu] \ar@{->}[ru]
}
$$
shows that $p\restrict[\ua{x}]$ is a bounded morphism.
\end{enumerate}
\item $x \in W_2$: Define
\begin{gather*}
Y \df \bigcup \set{\ua{x}: x \in W_2} \cap (L_W^2 \setminus V);
\end{gather*}
then,
\begin{gather*}
Y = \set{y \in L_W^2\setminus V: (\exists x \in L_W^1)[x \precsim y \text{ and } \uaP{x} \cap L_V^2 \neq \z]}.
\end{gather*}
For each $y \in Y$ we set
\begin{gather*}
X_y \df \set{x \in L_W^1: x \precsim y \tand \ua{x} \cap L_V^2 \neq \z}.
\end{gather*}
The situation $x \in X_y$ is depicted in the following diagram:
$$
\xymatrix{
y && v  \\
& x \ar@{->}[lu] \ar@{->}[ru]
}
$$
If $x \in X_y$, then $x \not\in V$, since $x \precsim y \not\in V$ and $V$ is a generated substructure of $W$.

Our next aim is to show that $\set{X_y: y \in Y}$ is a partition of $W_2$. Assume that $y,y' \in Y, y \neq y'$, and $x \in X_y \cap X_{y'}$. Then, $x \lneq y, x \lneq y'$ and therefore, $\uaP{x} = \set{y,y'} \subseteq L_W^2 \setminus V$ by \eqref{bw2R}. This contradicts $x \in W_2$. If $x \in W_2$ there is some $y \in Y$ such that $x \precsim y$, thus, $x \in X_y$.  Hence, $p \restrict X_y$ and $p \restrict X_{y'}$  may be defined independently if $y \neq y'$.

Let $y \in Y$ and enumerate $X_y = \set{x_1, \ldots, x_k}$; then, $x_i \precsim y$ and $x_i \precsim v_i$ for exactly one $v_i \in L_V^2$; note that the $v_i$ are not necessarily different. Suppose \wlg that $v_i = v_1$ for $1 \leq i \leq m \leq k$, and set $p(y) \df v_1$ as well as $p(x_i) \df v_1$ for $1 \leq i \leq m$.

$$
\xymatrix{
y  \ar@{-->}[r]^p & v_1 = v_2 & \\
& x_1 \ar@{->}[lu] \ar@{-->}[u]_p \ar@<1ex>[u]_{.} & x_2 \ar@{-->}[lu]_p  \ar@<1ex>[lu]_{.}
}
$$

Then, $p[\bigcup \set{\ua{x_i}: 1 \leq i \leq m}] = \set{v_1}$, and clearly, $p \restrict \bigcup \set{\ua{x_i}: 1 \leq i \leq m}$ is a  bounded morphism.

For $i = m+1, \ldots, k$ choose some $u_i \in L_V^1$ such that $u_i \in \da{v_1} \cap \da{v_i}$ and define $p(x_i) \df u_i$. Now, the diagram
$$
\xymatrix{
y  \ar@{-->}[r]^p & v_1 & v_i \\
& x_i \ar@{->}[lu] \ar@{->}[ru] \ar@{-->}[r]_p & u_i \ar@{->}[lu] \ar@{->}[u]
}
$$
shows that $p\restrict \ua{x_i}$ is a bounded morphism. This way we have defined $p \restrict \bigcup \set{\ua{x}: x  \in W_2}$.

\item $x \in W_3$: Thus far, we have well defined $p$ on $V$ and $\ua{(W_1 \cup W_2)}$. If $x \in \da{y} \cap \da{y'}$ for some distinct $y,y' \in L_W^2 \setminus V$, then $p$ might already been defined on $\uaP{x}$ in the previous step for example,

 $$
\xymatrix{
y & y' & v =  p(y') \\
x \ar@{->}[u] \ar@{->}[ru] & x' \ar@{->}[u] \ar@{->}[ru]
}
$$
First, suppose that $\uaP{x} = \set{y}$. If $p(y)$ has already been defined, set $p(x) \df p(y)$. Otherwise, choose some $v \in L_V^2$ and set $p(x) \df v, p(y) \df v$.

Finally, let $\uaP{x} = \set{y,z}$, $y \neq z$. If both $p(y)$ and $p(z)$ have been defined choose $u \in \da{p(y)} \cap \da{p(z)} \cap V$ and set $p(x) \df u$. Then,
$$
\xymatrix{
y & & z & p(y) && p(z) \\
& x \ar@{->}[ru] \ar@{->}[lu] \ar@{-->}[rrr]_p && & u \in V \ar@{->}[ru] \ar@{->}[lu]
}
$$
shows that  $p\restrict\ua{x}$ is a bounded morphism.

If only one of $p(y), p(z)$ has been defined, say, $p(y)$, choose $v \in L_V^2$, set $p(z) \df v$,  choose $u \in \da{p(y)} \cap \da{v}$ and set $p(x) \df u$:
$$
\xymatrix{
y & & z \ar@{-->}[r]^p & v && p(y) \\
& x \ar@{->}[ru] \ar@{->}[lu] \ar@{-->}[rrr]_p && & u  \ar@{->}[ru] \ar@{->}[lu]
}
$$
As in the previous case, $p\restrict\ua{x}$ is a bounded morphism. If neither $p(y)$ nor $p(z)$ have been defined, choose $u \in L_V^1,  v \in L_V^2$ such that $u \precnsim v$, and set $p(y), p(z) \df v$ and $p(x) \df u$. Clearly, $p \restrict \ua{x}$ is an epimorphism.
\end{enumerate}
This completes the proof.
\end{proof}

Next, we will show that the class of projective fork algebras is closed under subalgebras. As a preparation we mention two lemmas:

\begin{lemma}\label{lem:nonclosat} Let $\klam{A,f}$ be a finite closure algebra such that $f(a) \cdot f(b) \neq 0$ for all non-closed atoms $a,b \in A$. Then, $f(a) \cdot f(b) \neq 0$ for all non-closed $a,b \in A$.
\end{lemma}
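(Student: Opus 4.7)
The plan is to reduce the statement about arbitrary non-closed elements to the hypothesis about non-closed atoms by observing that every non-closed element must dominate at least one non-closed atom, and then using monotonicity of $f$.

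First, I would show the auxiliary fact: if $c \in A$ is non-closed, then there is a non-closed atom below $c$. Since $A$ is finite and atomic, write $c = a_1 + \cdots + a_k$ as the sum of the atoms below $c$. Because $f$ is additive, $f(c) = f(a_1) + \cdots + f(a_k)$. If every $a_i$ were closed, then $f(a_i) = a_i$ for each $i$, and hence $f(c) = a_1 + \cdots + a_k = c$, contradicting the assumption that $c$ is non-closed. Note that $c \neq 0$ here (since $f(0) = 0$ forces $0$ to be closed), so the sum is nonempty.

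Now suppose $a, b \in A$ are non-closed. By the observation above, pick a non-closed atom $a' \leq a$ and a non-closed atom $b' \leq b$. The hypothesis gives $f(a') \cdot f(b') \neq 0$. Since $a' \leq a$ and $b' \leq b$, monotonicity of $f$ (which follows from additivity) yields $f(a') \leq f(a)$ and $f(b') \leq f(b)$, so
\[
0 \neq f(a') \cdot f(b') \leq f(a) \cdot f(b),
\]
as required.

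There is no real obstacle here; the only thing to be careful about is handling $a = 0$ or $b = 0$, which is ruled out immediately because $f(0) = 0$ makes $0$ closed, so non-closed elements are automatically nonzero and admit an atom decomposition.
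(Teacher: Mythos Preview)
Your proof is correct and follows essentially the same approach as the paper: both decompose a non-closed element into atoms, observe that at least one atom must be non-closed (since otherwise additivity of $f$ would force the element to be closed), and then apply the hypothesis to those atoms together with monotonicity of $f$. The only cosmetic difference is that the paper phrases the final step as a contradiction from assuming $f(a)\cdot f(b)=0$, whereas you argue directly via $0 \neq f(a')\cdot f(b') \leq f(a)\cdot f(b)$.
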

\begin{proof}
Suppose that $a,b \in A$ are non-closed, $a = a_1 + \dots + a_n$, $b = b_1 + \dots b_k$ for $a_1,  \dots, a_n, b_1, \dots, b_k \in \At(A)$. Assume that $f(a) \cdot f(b) = 0$. Since $a,b$ are not closed, there are non-closed atoms $a_i \leq a, b_j \leq b$: Otherwise, for example, $f(a) = f(a_1) + \dots + f(a_n)  = a_1 + \dots + a_n = a$ contradicting that $a$ is not closed. By the hypothesis, $f(a_i) \cdot f(b_j) \neq 0$ which implies $f(a) \cdot f(b) \neq 0$.
\end{proof}

\begin{lemma}\label{lem:di}\cite[Corollary 4.2]{nr93} 
Suppose that $\klam{A,f}$ is a non-trivial closure algebra. Then, $A$ is directly indecomposable \tiff
\begin{gather*}
(\forall x)[f(x) = x \tand f(-x) = -x \Implies x = 0 \tor x = 1].
\end{gather*}
Hence, each subalgebra of a directly indecomposable $A$ is directly indecomposable.
\end{lemma}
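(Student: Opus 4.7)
My plan is to derive both parts of the lemma from the observation made earlier in the paper that whenever $b$ and $-b$ are both closed in a closure algebra $\klam{B,f}$, then $B \cong \da{b} \times \da{-b}$; that is, every non-trivial clopen element induces a direct product decomposition.

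For the equivalence, I would handle the two directions as follows. For the contrapositive of the ``only if'' direction, assume there is some $x$ with $0 \neq x \neq 1$ satisfying $f(x) = x$ and $f(-x) = -x$. Then $x$ and $-x$ are both closed, so by the decomposition recalled in Section~\ref{sec:def}, $A \cong \da{x} \times \da{-x}$. Since $0 \neq x$ and $0 \neq -x$, both factors are non-trivial, so $A$ is not directly indecomposable. For the ``if'' direction, suppose $A \cong A_1 \times A_2$ with $A_1, A_2$ both non-trivial. The closure operator on a product acts componentwise, and since $f_i(0) = 0$ and $f_i(1) = 1$ for each $i$, the element $x$ corresponding to $(1_{A_1}, 0_{A_2})$ satisfies $f(x) = x$ and $f(-x) = -x$; it is clopen. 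Moreover $x \neq 0$ because $A_1$ is non-trivial and $x \neq 1$ because $A_2$ is non-trivial, which contradicts the right-hand side.

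For the hereditary assertion, suppose $A$ is directly indecomposable and let $B$ be a subalgebra of $A$. If $B$ were not directly indecomposable, by the equivalence just established there would exist $x \in B$ with $x \neq 0$, $x \neq 1$, and $f(x) = x$, $f(-x) = -x$ computed in $B$. But a subalgebra inherits the Boolean constants $0,1$ and the operations $-$ and $f$ from $A$, so the same two equations and inequalities hold in $A$; applying the equivalence in the other direction then contradicts the direct indecomposability of $A$.

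The argument is essentially mechanical once the key fact about clopen elements inducing a product decomposition is in hand, so I do not expect any real obstacle; the only point worth being careful about is ensuring that both factors in the decomposition are genuinely non-trivial, which in each direction reduces to checking that $x \notin \{0,1\}$.
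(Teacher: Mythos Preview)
Your argument is correct. Note, however, that the paper does not supply its own proof of this lemma: it is simply quoted from \cite[Corollary 4.2]{nr93}, so there is nothing to compare against. Your proof is a clean, self-contained derivation from the fact recorded in Section~\ref{sec:def} that a clopen element $b$ yields $B \cong \da{b} \times \da{-b}$, together with the universal-algebraic triviality that operations on a direct product act componentwise. One small point you might make explicit: the hereditary claim also needs $B$ to be non-trivial, but this is automatic since any subalgebra of a non-trivial closure algebra contains $0 \neq 1$.
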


\begin{theorem}\label{thm:sub}
Suppose that $\klam{A,f}$ is a projective fork algebra, and $B$ is a subalgebra of $A$. Then, $B$ is projective in $\Eq(\BF)$.
\end{theorem}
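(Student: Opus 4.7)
The plan is to verify that $B$ satisfies the two sufficient conditions for projectivity established in Theorem~\ref{thm:proj}: direct indecomposability, and the condition $f(a) \cdot f(b) \neq 0$ for every pair of non-closed atoms $a,b$. Since $B$ is a subalgebra of a finite algebra in $\Eq(\BF)$, it is itself finite and lies in $\Eq(\BF)$, so Theorem~\ref{thm:proj} is indeed applicable. If $B = \two$ there is nothing to prove, so we may assume $B \neq \two$.

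First I would handle direct indecomposability. Since $A$ is projective, Theorem~\ref{thm:projnec} gives that $A$ is directly indecomposable. Lemma~\ref{lem:di} then yields at once that the subalgebra $B$ inherits direct indecomposability, since the criterion stated there is purely equational and survives passage to subalgebras (and this is explicitly noted in the statement of the lemma).

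The second condition is the main point, and the subtlety to be navigated is that an atom of $B$ need not be an atom of $A$. Let $a, b \in B$ be non-closed atoms of $B$. Because the closure operator of $B$ is simply the restriction of $f$ to $B$, we have $f(a) \neq a$ and $f(b) \neq b$ viewed in $A$ as well; that is, $a$ and $b$ are non-closed \emph{elements} of $A$ (even though they are typically not atoms of $A$). Now, $A$ being projective, Theorem~\ref{thm:projnec} applies to $A$ and gives $f(c)\cdot f(d) \neq 0$ for all non-closed atoms $c,d$ of $A$. This is exactly the hypothesis needed to invoke Lemma~\ref{lem:nonclosat}, which upgrades the inequality from atoms to arbitrary non-closed elements of $A$. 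Applying it with our $a$ and $b$, we conclude $f(a) \cdot f(b) \neq 0$ in $A$. Since $a,b$, and therefore $f(a),f(b)$, lie in $B$, and $B$ is closed under meet, the same inequality holds in $B$.

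Having verified both conditions, Theorem~\ref{thm:proj} yields that $B$ is projective in $\Eq(\BF)$. The main obstacle, as indicated above, is precisely the disparity between atoms of $B$ and atoms of $A$; that is exactly the gap Lemma~\ref{lem:nonclosat} was engineered to bridge, which is why that lemma was proved just before this theorem.
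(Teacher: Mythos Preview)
Your proof is correct and follows essentially the same approach as the paper's: both use Lemma~\ref{lem:di} to inherit direct indecomposability, then invoke Theorem~\ref{thm:projnec} together with Lemma~\ref{lem:nonclosat} to pass from non-closed atoms of $A$ to arbitrary non-closed elements, thereby handling the non-closed atoms of $B$. Your write-up is in fact a bit more explicit than the paper's (noting that $B$ is finite and in $\Eq(\BF)$, and citing Theorem~\ref{thm:proj} for the final step), but the argument is the same.
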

\begin{proof}
Since $A$ is directly indecomposable, so is $B$ by Lemma \ref{lem:di}. Suppose that $a,b$ are non-closed atoms of $B$, and assume that $f(a) \cdot f(b) = 0$. Since $A$ is projective, the meet of any two non-closed atoms of $A$ is non-zero by \ref{thm:projnec}, and thus, the meet of any two non-closed elements of $A$ is non-zero by Lemma \ref{lem:nonclosat}. Since $B \leq A$ this also holds for $B$.
\end{proof}
This also follows from the fact that the dual conditions of Theorem \ref{thm:projnec} are preserved under surjective bounded morphisms. It is a quite strong condition, since it implies, for example, that the projective fork algebras are exactly the subalgebras of free algebras.

Next we consider a structure which will be important in our investigation of projective algebras as well as in determining the unification  type of $\Eq(\BF)$. Consider the frame \Wframe in the form of a W shown in Figure \ref{fig:W}, and its complex algebra $\klam{\BW, f_\Wframe}$. 
\begin{figure}[htb]
\caption{The frame \Wframe}\label{fig:W}
$$
\xymatrix{
t && v  && w  \\
& u  \ar@{->}[ru] \ar@{->}[lu] && u' \ar@{->}[ru] \ar@{->}[lu]
}
$$
\end{figure}
It has 5 atoms, and we identify these with the points of $\Wframe$. The action of $f_\Wframe$ on the atoms is given in Table \ref{tab:W}.
\begin{table}[ht]
\caption{The values of $f_\Wframe$ on $\At(B_\Wframe)$}\label{tab:W}
$$
\begin{array}{l|ccccc}
x & u &  u'  & t & v  & w \\
f_\Wframe(x) & u & u' & u+t & u+u'+v & u'+w
\end{array}
$$
\end{table}

It is not hard to see that $\Eq(\BW) =\Eq(\BF)$. There is an intimate connection between $B_\Wframe$ and projective fork algebras: 
\begin{theorem}\label{thm:Wproj}
Let $A \in \Eq(\BF)$ be a directly indecomposable fork algebra. Then, $A$ is projective in $\Eq(\BF)$ \tiff $B_\Wframe$ is not isomorphic to a subalgebra of $A$.%
\footnote{~We are grateful to a reviewer who spotted an error in our original proof.}
\end{theorem}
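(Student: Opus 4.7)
The plan is to invoke the characterisation obtained by combining Theorems~\ref{thm:projnec} and~\ref{thm:proj}: under the standing hypothesis of direct indecomposability, $A$ is projective in $\Eq(\BF)$ \tiff $f(a) \cdot f(b) \neq 0$ for all non-closed atoms $a,b \in A$. So it suffices to prove this arithmetical condition equivalent to ``$\BW$ is not isomorphic to a subalgebra of $A$''. Throughout, $L^1$ and $L^2$ denote the two levels of the canonical frame $\Cf(A)$.

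For the forward direction I would argue contrapositively: from an embedding $e\colon \BW \into A$ I extract a witness in $A$ violating the arithmetic condition. Table~\ref{tab:W} shows that the non-closed atoms $t,w$ of $\BW$ satisfy $f_\Wframe(t) \cdot f_\Wframe(w) = (u+t)(u'+w) = 0$, so the images $e(t), e(w)$ are non-closed in $A$ (since $e$ preserves $f$ and $e(u), e(u') \neq 0$) and satisfy $f(e(t)) \cdot f(e(w)) = e(0) = 0$. The contrapositive of Lemma~\ref{lem:nonclosat} then produces non-closed \emph{atoms} $a',b' \in A$ with $f(a') f(b') = 0$, so $A$ is not projective by Theorem~\ref{thm:projnec}.

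For the reverse direction, suppose $A$ is not projective, fix non-closed atoms $a,b \in A$ with $f(a)\cdot f(b) = 0$, and produce $\BW \into A$ by constructing a surjective bounded morphism $p\colon \Cf(A) \bonto \Wframe$ (so that Lemma~\ref{lem:dual}(iv) delivers the embedding). Direct indecomposability makes $\Cf(A)$ connected; $a,b \in L^2$ are distinct, maximal, and share no predecessor in $L^1$. Consequently every undirected path in $\Cf(A)$ joining $a$ to $b$ alternates between $L^1$ and $L^2$, has even length, and length $2$ is impossible. A shortest such path, of length $\geq 4$, yields $y_1,y_n \in L^1$ and $c_1, c_{n-1} \in L^2 \setminus \{a,b\}$ with $y_1 \precsim a$, $y_1 \precsim c_1$, $y_n \precsim c_{n-1}$, $y_n \precsim b$.

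I would then define $p$ on $L^2$ by $p(a) \df t$, $p(b) \df w$, and $p(z) \df v$ for every other $z$; on $L^1$, letting $N_2(y) \df \ua y \cap L^2$ (of cardinality $1$ or $2$ by local width and connectedness), set
\[
p(y) \df \begin{cases} t & \text{if } N_2(y) = \{a\},\\ w & \text{if } N_2(y) = \{b\},\\ v & \text{if } N_2(y) \subseteq L^2 \setminus \{a,b\},\\ u & \text{if } N_2(y) = \{a,z\},\ z \in L^2 \setminus \{a,b\},\\ u' & \text{if } N_2(y) = \{b,z\},\ z \in L^2 \setminus \{a,b\}. \end{cases}
\]
The case $N_2(y) = \{a,b\}$ is excluded since then $y \leq f(a) \cdot f(b) = 0$. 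Preservation of $\precsim$ is an easy case check; the back condition is non-trivial only when $p(y) \in \{u,u'\}$, and there the two elements of $N_2(y)$ furnish exactly the required witnesses, namely $\{t,v\}$ or $\{v,w\}$ respectively. Surjectivity is immediate from $p(y_1) = u$, $p(y_n) = u'$, $p(a) = t$, $p(c_1) = v$, $p(b) = w$. The main obstacle, in my view, is spotting that the naive three-way split $\{a\},\,L^2 \setminus \{a,b\},\,\{b\}$ of $L^2$ actually works: because $a$ and $b$ share no lower cover, no $L^1$ element is forced onto an incompatible pair of colours, and any $L^1$ element with just one upper cover is collapsed harmlessly to the image of that cover, so the construction succeeds regardless of how long the $a$--$b$ path actually is.
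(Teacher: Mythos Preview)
Your proof is correct. The forward direction matches the paper's argument almost verbatim. For the converse you take a genuinely different, dual route: where the paper works inside $A$, explicitly writing down five elements $v \df -f(a+b)$, $d \df -v \cdot f(v)$, $u \df d \cdot f(a)$, $t \df -d \cdot f(a)$, $u' \df d \cdot f(b)$, $w \df -d \cdot f(b)$ and then checking by hand that they are non-zero, pairwise disjoint, sum to $1$, and generate an $f$-closed subalgebra isomorphic to $\BW$, you instead build a surjective bounded morphism $p\colon \Cf(A) \bonto \Wframe$ and appeal to Lemma~\ref{lem:dual}(4). The two constructions are in fact dual to one another: your preimages $p^{-1}(t), p^{-1}(v), \ldots$ coincide as subsets of $\Cf(A)$ with the paper's $t, v, \ldots$, so nothing is gained or lost in logical strength. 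What each approach buys is a matter of taste: your frame argument makes the role of direct indecomposability transparent (connectedness forces an $a$--$b$ path of length $\geq 4$, which immediately supplies the witnesses for surjectivity onto $u$, $v$, $u'$), and the case split on $N_2(y)$ is easy to verify; the paper's algebraic computation is self-contained and avoids any appeal to duality, at the cost of several ad~hoc verifications that the five elements are non-zero.
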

\begin{proof}
\aright Suppose that $\klam{A,f}$ is projective in $\Eq(\BF)$; then, $f(a) \cdot f(b) \neq 0$ for all non-closed $a,b \in A$ by Theorem \ref{thm:projnec} and Lemma \ref{lem:nonclosat}. If $B_\Wframe$ is a subalgebra of $A$, then Table \ref{tab:W} shows that $t$ and $w$ are non-closed and $f(t) \cdot f(w) = 0$, a contradiction.

\aleft We show the contrapositive, namely, that a directly indecomposable fork algebra $A$ which is not projective in $\Eq(\BF)$ contains an isomorphic copy of $\BW$ as a subalgebra. 

Since $A$ is not projective, there are non-closed atoms $a,b \in A$ such that $f(a) \cdot f(b) = 0$. For later use observe that this implies $a \cdot f(b) = 0$. Suppose that 
\begin{gather*}
\At(A) = \set{a,a_1, \ldots, a_n, b, b_1, \ldots, b_m, c_1, \ldots c_k}, 
\end{gather*}
and that
\begin{gather*}
f(a) = a+ a_1 + \ldots a_n, \ f(b) = b + b_1 + \ldots b_m, \ -f(a+b) = c_1 + \ldots + c_k.
\end{gather*} 
Define
\begin{multline*}
v \df -f(a+b), \ d \df -v \cdot f(v), \\  u \df d \cdot f(a), \ t \df -d \cdot f(a),  \  u' \df d \cdot f(b), \ w \df -d \cdot f(b).
\end{multline*}
Note that $f(a) = u + t, f(b) = u' + w$, and that
\begin{align*}
u &= d \cdot f(a)= -v \cdot f(v) \cdot f(a) = f(a+b) \cdot f(v) \cdot f(a) = f(v) \cdot f(a), \\
t &= -d \cdot f(a) = (v + -f(v)) \cdot f(a) = (-f(a+b) + -f(v)) \cdot f(a) = -f(v) \cdot f(a).
\end{align*}
Furthermore, $a \cdot f(v) = 0$, i.e. $a \not\leq f(v)$: If $a \leq f(v)$, then $f(v) = f(c_1) + \ldots + f(c_k)$ implies that $a \leq f(c_i)$ for some $c_i$. Since $a \neq c_i$, $a$ is closed by Lemma \ref{lem:atom}(1), a contradiction. Since $a \cdot f(b) = 0$ as well, it follows that $a \not\leq f(b) + f(v)$, in particular, $f(b) + f(v) \neq 1$. Observe that $a \cdot f(v) = 0$ implies $a \cdot d = 0$.  Similarly it can be shown that $b \cdot f(v) = 0$ and $b \cdot d = 0$. 

Set $M \df \set{u,u',t,v,w}$. Clearly, the elements of $M$ are pairwise disjoint, and $\sum M = 1$. Let $B$ be the Boolean subalgebra of $A$ generated by $M$; by the properties of $M$ this is the closure of $M$ under joins. We will show
\begin{enumerate}
\item $M \cap \set{0} = \z$.
\item $B$ is closed under $f$.
\item $B \cong \BW$.
\end{enumerate}
1. Assume $v =0$; then, $f(a) + f(b) = 1$. Together with $f(a) \cdot f(b) = 0$, this implies that $f(a)$ and $f(b)$ are clopen complementary elements not equal to $0$ or $1$, contradicting that $A$ is directly indecomposable; thus, $v \neq 0$.  

Assume $u = 0$; then, $f(a) = u + t = t = -f(v) \cdot f(a)$, which implies $f(a) \cdot f(v) = 0$. Since $f(a) \cdot f(b) = 0$ by the assumption, we have $f(a) \cdot f(v+b) = 0$. Noting that $f(a) + f(v+b) = 1$ and $\set{f(a), f(v+b)} \cap \set{0,1} = \z$, we see that  $f(a)$ and $f(v+b)$ are clopen complementary elements not equal to $0$ or $1$, contradicting that $A$ is directly indecomposable. Similarly, $u' \neq 0$.

Assume $t = 0$; then, $f(a) = u = f(v) \cdot f(a)$, hence, $f(a) \leq f(v)$. This contradicts $a \cdot f(v) = 0$. Similarly, $w \neq 0$.

2. For $f(v)$, note that $f(v) = v \cdot f(v) + -v \cdot f(v) = v + d$, and 
\begin{gather*}
d + v = d \cdot -v + v = d \cdot (f(a) + f(b)) + v = d \cdot f(a) + d \cdot f(b) + v = u + u' + v,
\end{gather*}
which shows that $f(v) \in B$.

 Next, we show that $f(t) = f(a)$: Since $a \leq -f(v)$ we obtain $a \leq -f(v) \cdot f(a) = t$, consequently, $f(a) \leq f(t)$, and together with $t \leq f(a)$ we obtain $f(t) = f(a)$; similarly, $f(b) = f(w)$. If $a \leq d$, then $a \leq -v \cdot f(v) \leq f(v)$, contradicting $a \cdot f(v) = 0$. Since $u \neq 0$, $a \cdot d = 0$ implies that $d \cdot f(a) = a_{i_1} + \ldots + a_{i_r}$ for some $a_{i_1}, \ldots a_{i_r} \in \set{a_1, \ldots, a_n}$. By Lemma \ref{lem:atom}(1), $u = d \cdot f(a)$ is closed; similarly, $u'$ is closed. 

Altogether, we have shown that $B$ is closed under $f$, hence, a (modal) subalgebra of $A$.

3. The action of $f$ on the atoms of $B$ is given by
$$
\begin{array}{l|ccccc}
x & u &  u'  & t & v  & w \\
f(x) & u & u' & f(a) & v +d & f(b).
\end{array}
$$
Noting that $f(a) = d \cdot f(a) + -d \cdot f(a) = u + t$, $f(b) = d \cdot f(b) + -d \cdot f(b) = u' + w$, and $f(v) = v+d = u + u' +v$ we see that $B \cong \BW$. 
\end{proof}
Collecting the previous results, we arrive at characterizations of projective fork algebras:

\begin{theorem}\label{thm:projchar}
Suppose that $\klam{A,f}$ is a fork algebra. Then, the following conditions  are equivalent:
\begin{enumerate}
\item $A$ is projective in $\Eq(\BF)$. 
\item $A$ is directly indecomposable and $f(a) \cdot f(b) \neq 0$ for all non-closed elements of $A$. Dually, $\Cf(A)$ is connected, and $ \da{x} \cap \da{y} \neq \z$ for all $x,y \in L_{\Cf(A)}^2$.
\item $A$ is directly indecomposable and $B_\Wframe$ is not a subalgebra of $A$.
\end{enumerate}
\end{theorem}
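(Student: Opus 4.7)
The plan is to assemble this equivalence directly from the results already established in Section~\ref{sec:forkPROJ}, since no genuinely new work remains.

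For the equivalence of (1) and (2), I would argue $(1)\Rightarrow(2)$ by invoking Theorem~\ref{thm:projnec} to get direct indecomposability together with $f(a)\cdot f(b)\neq 0$ for all non-closed \emph{atoms}, and then lift the meet condition from atoms to arbitrary non-closed elements via Lemma~\ref{lem:nonclosat}. For $(2)\Rightarrow(1)$, note that any condition on all non-closed elements restricts in particular to non-closed atoms, so Theorem~\ref{thm:proj} applies and yields projectivity. (The degenerate case $A=\two$, which has no non-closed elements, is projective by the remark after the definition, and the condition in (2) is vacuous.)

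The dual reformulation inside (2) is just a transcription under the finite duality of Lemma~\ref{lem:dual}. Direct indecomposability of $A$ is equivalent, by Lemma~\ref{lem:di}, to the absence of a non-trivial clopen element, which on the dual side is precisely the absence of a non-trivial clopen generated subframe, i.e.\ connectedness of $\Cf(A)$. The inequality $f(a)\cdot f(b)\neq 0$ for atoms $a,b$ translates via~\eqref{ordframe} into the existence of an atom $c$ with $c\precsim a$ and $c\precsim b$, i.e.\ $\da{a}\cap\da{b}\neq\z$; and ``non-closed atom'' corresponds exactly to a point of $L^{2}_{\Cf(A)}$.

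Finally, the equivalence $(1)\Leftrightarrow(3)$ is simply a restatement of Theorem~\ref{thm:Wproj}, since under the standing assumption of direct indecomposability that theorem asserts precisely that projectivity is equivalent to $\BW$ not embedding into $A$.

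There is no essential obstacle in this proof — all difficulty is packed into Theorems~\ref{thm:projnec}, \ref{thm:proj}, \ref{thm:Wproj} and Lemma~\ref{lem:nonclosat}. The only point requiring a moment of care is checking that the dual rendering of the meet condition quantifies over the right set (namely $L^2$ rather than all of $\Cf(A)$), which follows because $L^1$ points correspond to closed atoms and are therefore excluded by hypothesis.
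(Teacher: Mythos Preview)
Your proposal is correct and matches the paper's approach exactly: the paper presents this theorem without proof, prefacing it only with ``Collecting the previous results, we arrive at characterizations of projective fork algebras,'' so the intended argument is precisely the assembly from Theorems~\ref{thm:projnec}, \ref{thm:proj}, \ref{thm:Wproj} and Lemma~\ref{lem:nonclosat} that you describe. Your explicit unpacking of the dual reformulation and the handling of the degenerate case $A=\two$ go slightly beyond what the paper spells out, but add nothing incorrect.
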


Finally in this section we present a general result regarding projectivity of \BF in $\VaCl$. Its proof requires some background of Heyting algebras and interior algebras which we will not go into. We shall just give the references leaving the details to the interested reader.

\begin{theorem}\label{thm:BFproj}
$\BF$ is projective in \VaCl.
\end{theorem}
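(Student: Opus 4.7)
The plan is to reduce the projectivity of $\BF$ in $\VaCl$ to a projectivity statement for Heyting algebras, via the standard McKinsey--Tarski correspondence between Heyting algebras and (Grzegorczyk) interior algebras. I would begin by identifying the Heyting algebra $H$ of open elements of $\BF$: since $\BF = \Cm(\FF)$, the open elements are exactly the upward-closed subsets of $\FF$, so $H$ is the five-element Heyting algebra with bottom, two incomparable atoms $\set{v}$ and $\set{w}$, their join $\set{v,w}$, and top $\set{u,v,w}$. The Esakia dual poset of $H$ is isomorphic to $\FF$ itself, and $\BF$ is recovered from $H$ by the McKinsey--Tarski free Boolean extension.

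First, I would verify that $H$ is projective in the variety of Heyting algebras. By Esakia duality for finite Heyting algebras, this is equivalent to $\FF$ being injective in the category of finite posets and bounded morphisms: for every finite poset $Q$ and every injective bounded morphism $i\colon \FF \to Q$ with upward-closed image, one must construct a surjective bounded morphism $p\colon Q \to \FF$ with $p \circ i$ the identity on $\FF$. I would carry this out by case analysis on the relation of each $x \in Q$ outside the image of $\FF$ to the three points $u, v, w$ --- whether $x$ lies below $u$; below both $v$ and $w$ but not $u$; below exactly one of $v, w$; or below none of the three --- mirroring the construction in the proof of Theorem~\ref{thm:proj}, but without the height-$2$ and local-width-$2$ restrictions, which were needed there only because the ambient frame was itself a fork frame.

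Second, I would invoke the known transfer theorem from the theory of modal companions of intermediate logics (see \cite{mw20} and the works of Ghilardi, Esakia, and Maksimova) which guarantees that the McKinsey--Tarski embedding carries Heyting algebras projective in their variety to interior algebras projective in $\VaCl$. Together with the previous step and the identification of $\BF$ as the image of $H$ under this embedding, this yields the theorem.

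The main obstacle will be the first step, specifically the consistent handling of elements of $Q$ unrelated to the image of $\FF$ that share upper bounds with elements below only $v$ or only $w$. In such configurations a naive case-by-case assignment breaks order preservation, and one must be prepared to map offending points to the root $u$ rather than to a leaf, so that both order preservation and the back condition hold globally throughout $Q$.
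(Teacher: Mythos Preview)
Your overall strategy coincides with the paper's: pass to the Heyting algebra $H$ of open elements of $\BF$, show $H$ is projective among Heyting algebras, and then transfer. The paper's proof is much shorter than your plan, however. For the first step it simply observes that $H$ is the four-element Boolean algebra with a new top, which is already known to be projective by \cite[Theorem~4.10]{bh70}, so no frame argument is carried out. For the transfer it notes that $\BF$ is generated by its open elements (a $^*$-algebra in Blok's sense) and invokes \cite[Theorem~I.7.14]{blok76} directly; your reference \cite{mw20} concerns surjectivity of epimorphisms in Heyting varieties and is not the transfer result you need.

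There is also a genuine gap in your first step. You propose to show that $H$ is projective by proving that $\FF$ is injective in the category of \emph{finite} posets with bounded morphisms. But the variety of Heyting algebras is not locally finite, so Lemma~\ref{lem:fin} is unavailable and finite-frame injectivity does not by itself yield projectivity of $H$ in the full variety: a surjection from an infinite Heyting algebra onto $H$ corresponds dually to an embedding of $\FF$ into an infinite Esakia space, and your case analysis says nothing about that situation. Either you must run the retraction argument for arbitrary posets (and then check it is continuous for the Esakia topology), or you must cite a result that reduces projectivity of a finite Heyting algebra to a finite-frame condition. Balbes--Horn already provides such a reduction, which is why the paper simply cites it.
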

\begin{proof}
It can be seen from Figure \ref{fig:forkA} that \BF is generated by its open elements, and thus, it is a $^*$-algebra in the sense of \citet[Definition I.2.14]{blok76}. Its Heyting algebra $\BF^\circ$ of open elements is a four element Boolean algebra with a new largest element added, and therefore, $\BF^\circ$ is a projective Heyting algebra by \cite[Theorem 4.10]{bh70}. It now follows from \cite[Theorem I.7.14]{blok76} that $\BF$ is projective in \VaCl.
\end{proof}

\section{The unification type of the variety generated by the fork}\label{sec:W}
It was shown by \citet[Corollary 4.8]{dkw23} by means of Kripke models that $\Eq(\BF)$ has finitary unification. In this section we shall present a much simpler proof of this result by algebraic means.

\begin{theorem}\label{thm:lfuni}
If \Va is a locally finite variety in which the class of finite projective algebras is closed under subalgebras, then unification in \Va is either unitary or finitary.\footnote{~We are grateful to a reviewer for suggesting this generalization of our original result.
}
\end{theorem}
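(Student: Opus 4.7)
The plan is to show that under the hypothesis, every unifiable finite $A \in \Va$ satisfies $t(A) \in \set{1,\omega}$; this immediately forces $t(\Va)$ to be unitary or finitary.

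The whole argument rests on a single observation that collapses the decomposition \eqref{decomp}: the closure hypothesis promotes every admissible quotient to a projective algebra. Indeed, if $\theta \in \C(A)$ then $A/\theta$ is isomorphic to a subalgebra of some finite projective $B \in \Va$, so by hypothesis $A/\theta$ is itself projective in $\Va$. Consequently the canonical surjection $p_\theta\colon A \onto A/\theta$ is already a unifier of $A$. Moreover, for any unifier $\klam{u,B}$ of $A$ with $\ker(u) = \theta$, the homomorphism theorem factors $u = e \circ p_\theta$ with $e\colon A/\theta \into B$, whence $\klam{p_\theta,A/\theta} \succcurlyeq \klam{u,B}$.

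I would then assemble a finite $\mu$-set from these canonical unifiers. Since $A$ is finite, its congruence lattice and \emph{a fortiori} $\C(A)$ are finite. Let $M$ be the set of minimal elements of $\C(A)$, which is non-empty because $A$ is unifiable. For each $\theta \in M$ the pair $\klam{p_\theta,A/\theta}$ is a unifier by the previous step, and for any unifier $\klam{u,B}$ of $A$ one can pick some $\theta \in M$ with $\theta \subseteq \ker(u)$ and conclude $\klam{p_\theta,A/\theta} \succcurlyeq \klam{p_{\ker(u)},A/\ker(u)} \succcurlyeq \klam{u,B}$. By Lemma~\ref{lem:retract}(1)(a), unifiers indexed by distinct elements of $M$ are pairwise $\succcurlyeq$-incomparable, so $\set{\klam{p_\theta,A/\theta} : \theta \in M}/{\approx}$ is a finite, dense antichain in $U_A/{\approx}$.

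It then follows that $t(A) = 1$ if $|M| = 1$ and $t(A) = \omega$ if $|M| \geq 2$, so in particular $t(A) \notin \set{\infty, 0}$, which is exactly the dichotomy in the definition of unitary/finitary varieties. All the conceptual weight sits in the first step --- recognising that the subalgebra-closure hypothesis makes ``choose a unifier'' tantamount to ``choose an admissible kernel''; once that is in place, the finiteness of $\C(A)$ and standard antichain bookkeeping finish the job, and I do not anticipate any genuine obstacle beyond remembering to invoke Lemma~\ref{lem:retract}(1)(a) to secure that the minimal-kernel unifiers actually form an antichain rather than merely a complete set.
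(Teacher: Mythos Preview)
Your proof is correct and follows essentially the same route as the paper: both arguments observe that the subalgebra-closure hypothesis makes every admissible quotient $A/\theta$ projective, so each unifier is dominated by some canonical $\klam{p_\theta,A/\theta}$, and a finite $\mu$-set is then read off from the finite set $\C(A)$. Your choice of $M$ as the set of minimal elements of $\C(A)$ is in fact slightly sharper than the paper's ``maximal antichain in $\C(A)$'', and your explicit appeal to Lemma~\ref{lem:retract}(1)(a) for incomparability is a detail the paper leaves implicit.
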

\begin{proof}
Suppose that $A \in \Va$ is finite, and  that $\klam{u,B}$ is a unifier of $A$. Then, there are some admissible $\theta \in \C(A)$ such that $p_\theta[A] = u[A]$, and some embedding $e\colon p_\theta[A] \into B$ such that $u = e \circ p_\theta$. Since $A/\theta$ is isomorphic to a subalgebra of the projective algebra $B$, it is projective itself by the hypothesis; therefore, $\klam{p_\theta, A/\theta}$ is also a unifier of $A$ and $\klam{p_\theta, p_\theta[A]} \succcurlyeq \klam{u,B}$:
    
$$
\xymatrix{
& {A} \ar@{->>}[ld]_{p_\theta} \ar[rd]^{u = e \circ p_\theta} \\
{A/\theta} \ar@{->}[rr]^{e}_{\succcurlyeq} && {B} }
$$
Thus, with respect to $\preccurlyeq$, each unifier of $A$ is below a unifier of the form $\klam{p_\theta,p_\theta[A]}$ for some $\theta \in \C(A)$. As $\C(A)$ is finite, there can be no infinite $\mu$ set. 

Let $M$ be a maximal antichain in $\C(A)$. Then, $\set{\klam{p_\theta,p_\theta[A]}: \theta \in M}$ is a $\mu$ set:  Let $\klam{u,B} \in U_A$ and $u = e \circ p_\theta$; then, $\theta \in \C(A)$ and $\klam{p_\theta, p_\theta[A]} \succcurlyeq \klam{u,B}$. The maximality of $M$ now implies that $\klam{p_\psi, p_\psi[A]} \succcurlyeq \klam{p_\theta, p_\theta[A]}$ for some $\psi \in M$.
\end{proof}

\begin{theorem}\label{thm:forkuni}
The unification type of $\Eq(\BF)$ is finitary.
\end{theorem}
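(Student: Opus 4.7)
The plan is to combine Theorem \ref{thm:lfuni} with a direct analysis of the unifiers of $\BW$. Since $\Eq(\BF)$ is finitely generated it is locally finite, and Theorem \ref{thm:sub} shows its class of finite projective algebras is closed under subalgebras. Hence Theorem \ref{thm:lfuni} yields that the unification type of $\Eq(\BF)$ is unitary or finitary, and it suffices to exhibit one unifiable algebra of non-unitary type.

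The natural candidate is $\BW$, which is not projective by Theorem \ref{thm:Wproj} (applied with $\BW$ as a subalgebra of itself). Admissible congruences on $\BW$ correspond dually to generated subframes $V \subseteq \Wframe$ with $\Cm(V)$ projective; by Theorem \ref{thm:projchar} these are the connected $V$ satisfying $\da{x} \cap \da{y} \neq \z$ for all $x,y \in L_V^2$. Enumerating the generated subframes of $\Wframe$ and applying these conditions leaves only the singletons $\set{t}, \set{v}, \set{w}$ and the two cherries $V_1 \df \set{u,t,v}$ and $V_2 \df \set{u',v,w}$; the full frame $\Wframe$ is excluded because $\da{t} \cap \da{w} = \z$, while every other strictly larger generated subframe, namely $\set{u,t,v,w}$ and $\set{t,u',v,w}$, is disconnected.

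The canonical surjections $p_i\colon \BW \onto \Cm(V_i)$ then provide two unifiers whose kernels are the principal closed ideals $\da{(u'+w)}$ and $\da{(u+t)}$, which have trivial meet; hence $\klam{p_1,\Cm(V_1)}$ and $\klam{p_2,\Cm(V_2)}$ are incomparable in $\succcurlyeq$ by Lemma \ref{lem:retract}(1a). The factorization argument inside the proof of Theorem \ref{thm:lfuni} shows that every unifier of $\BW$ is $\preccurlyeq$-below some $\klam{p_\theta,\BW/\theta}$ with $\theta$ admissible, and the $\subseteq$-maximality of $V_1$ and $V_2$ among projective generated subframes then forces both $\klam{p_i,\Cm(V_i)}$ to be $\succcurlyeq$-maximal. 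Consequently $\BW$ has a $\mu$-set of size two and $t(\BW)=\omega$, so $\Eq(\BF)$ is finitary. The main obstacle is the case-check in the second paragraph: one must verify that no generated subframe strictly larger than $V_1$ or $V_2$ gives a projective algebra, so that these two really are $\subseteq$-maximal projective subframes and yield $\succcurlyeq$-maximal unifiers.
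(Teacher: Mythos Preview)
Your proof is correct and follows the same overall strategy as the paper: invoke Theorem~\ref{thm:lfuni} (using Theorem~\ref{thm:sub} for the subalgebra-closure hypothesis) to reduce to exhibiting a single non-unitary algebra, and then use $\BW$ with the two fork subframes $\ua u$ and $\ua u'$ as the witnessing unifiers. The only substantive difference is in how you show $\BW$ is not unitary. You enumerate all generated subframes of $\Wframe$, check projectivity via Theorem~\ref{thm:projchar}, and argue that $V_1,V_2$ are the $\subseteq$-maximal projective ones, hence $\klam{p_1,\Cm(V_1)},\klam{p_2,\Cm(V_2)}$ form a $\mu$-set of size two. The paper instead avoids the enumeration: after noting that the two kernels meet in $\set{0}$, it observes that any common $\succcurlyeq$-upper bound $\klam{u,A}$ would have $\ker(u)=\set{0}$, so $\BW$ would embed into the projective algebra $A$, contradicting Theorem~\ref{thm:Wproj}. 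Your route is a bit longer but yields the explicit $\mu$-set; the paper's route is shorter and uses Theorem~\ref{thm:Wproj} directly rather than only as the reason $\BW$ itself is not projective. Your case-check in the second paragraph is accurate: the only generated subframes properly containing $V_1$ or $V_2$ are $\set{u,t,v,w}$, $\set{t,u',v,w}$, and $\Wframe$, and the first two are disconnected while the last fails $\da t\cap\da w\neq\z$.
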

\begin{proof}
By Theorem \ref{thm:lfuni} it is enough to show that $\Eq(\BF)$ does not have unitary unification.
Consider the frame \Wframe of Figure \ref{fig:W} and its complex algebra \BW indicated in Table \ref{tab:W}. Examination of the frame \Wframe shows that there are exactly two connected generated subframes with more than one element, namely, $\ua{u'}$ and $\ua{u}$, corresponding to the quotients $\BW^1 \df \BW/\da{f(t)}$ and $\BW^2 \df \BW/\da{f(w)}$. Let $p_i\colon \BW \onto \BW^i$ be the respective quotient mappings; then, $\klam{p_1, \BW^1}$ and $\klam{p_2,\BW^2}$ are unifiers of $\BW$. Since $\da{f(t)} \cap \da{f(w)} = \set{0}$, it follows immediately from Lemma \ref{lem:retract} that they are incomparable with respect to $\succcurlyeq$.

 It remains to show that $\klam{p_1, \BW^1}$ and $\klam{p_2,\BW^2}$ are maximal with respect to $\succcurlyeq$. Assume that $\klam{u,A} \in U_{\BW}^\Va$ such that $\klam{u,A} \succcurlyeq \klam{p_1, \BW^1}, \klam{p_2,\BW^2}$. By Lemma \ref{lem:retract} we have $\ker(u) \subseteq \ker(p_{f(t)}) \cap \ker(p_{f(w)}) = \set{0}$. It follows that $u$ is injective, and therefore, $\BW \leq A$. Since  $A$ is projective in $\Eq(\BF)$, this contradicts Theorem \ref{thm:projchar}.
\end{proof}

Finally in this section we show that every variety of closure algebras with finitary unification contains \BF. The proof uses filtering unification defined in Section \ref{sec:alguni} and the notion of splitting pairs. The concept of a splitting pair of a lattice was introduced by \citet{mck72} and applied to the lattice of subvarieties of \VaCl by \citet{blok76}.  A pair $\klam{\Va_1, \Va_2}$ of subvarieties of $\VaCl$ is called \emph{splitting}, if 
\begin{enumerate}
\item $\Va_1 \not\leq \Va_2$,
\item If $\Va' \leq \VaCl$, then $\Va_1 \leq \Va'$ or $\Va' \leq \Va_2$.
\end{enumerate}
It is well known that $\Va_1$ is generated by a finite subdirectly irreducible algebra, say, $B$, and that $\Va_2$ is the largest subvariety of $\VaCl$ not containing $B$, called the \emph{splitting companion of $B$}; for details see e.g. \cite{blok76}. It has been known for some time \cite[Example, p. 158]{rau80}  that the splitting companion of $\BF$ is the variety $\Va_G$ of closure algebras $\klam{B,f}$ that satisfy the Geach identity
\begin{gather}\label{G}\tag{\textbf{G}}
f(f^\partial(x)) \leq  f^\partial(f(x)).
\end{gather}
It was shown by \citet{dl59} that these algebras are the algebraic models of the logic \textbf{S4.2} which is also known as \textbf{S4G}. 

To prove our theorem we need one more result concerning varieties whose unification is filtering:

\begin{lemma}\label{lem:filt} \cite[Theorem 8.4]{gs04} 
Unification for a variety \Va of closure algebras is filtering \tiff $\Va \leq \Va_G$. 
\end{lemma}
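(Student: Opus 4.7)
By Theorem \ref{thm:filter}, unification in $\Va$ is filtering iff the product of any two finite projective algebras of $\Va$ is projective in $\Va$; I work with this reformulation throughout, using duality so that finite projective algebras correspond to finite injective frames, and the product $B_1 \times B_2$ corresponds to the disjoint union $W_1 \sqcup W_2$ of the dual frames. The Geach identity $f(f^\partial(x)) \leq f^\partial(f(x))$ translates on frames to the convergence condition: $x \precsim y$ and $x \precsim z$ imply some $w$ with $y \precsim w$ and $z \precsim w$.

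For the direction $(\Leftarrow)$, assume $\Va \leq \Va_G$, so every $\Va$-frame is convergent. Let $W_1, W_2$ be finite injective $\Va$-frames, and let $V$ be a finite $\Va$-frame containing $W_1 \sqcup W_2$ as a generated subframe. Define $V_i \df \set{v \in V : (\exists w \in W_i)\, v \precsim w}$ for $i = 1,2$ and $V_0 \df V \setminus (V_1 \cup V_2)$. A common lower bound in $V$ of an element of $W_1$ and an element of $W_2$ would, by convergence, force a common upper bound in $V$ which must lie in $W_1 \sqcup W_2$ by upward closure, impossible since $W_1$ and $W_2$ are distinct components; hence $V_1 \cap V_2 = \emptyset$. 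Repeated use of convergence shows each of $V_0, V_1, V_2$ is both upward and downward closed, so $V$ splits as a disjoint union of generated subframes $V_0 \sqcup V_1 \sqcup V_2$. Injectivity of $W_i$ applied to the inclusion $W_i \subseteq V_i$ yields retractions $q_i \colon V_i \onto W_i$; injectivity of $W_1$ applied to the $\Va$-frame $V_0 \sqcup W_1$ yields a bounded morphism $V_0 \to W_1$. These combine into a bounded retraction $V \onto W_1 \sqcup W_2$, showing $\Cm(W_1) \times \Cm(W_2) \cong \Cm(W_1 \sqcup W_2)$ is projective in $\Va$.

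For the direction $(\Rightarrow)$, I argue the contrapositive. If $\Va \not\leq \Va_G$, the Geach identity fails in $\Va$; since it is canonical and the associated logic admits filtration, I obtain a finite $\Va$-frame $V$ with points $x, y, z$ satisfying $x \precsim y$, $x \precsim z$, and $\ua{y} \cap \ua{z} = \emptyset$. Replacing $y, z$ by maximal elements $\hat y \geq y$ and $\hat z \geq z$ preserves the failure of a common upper bound, so $\hat y \neq \hat z$ and both are maximal; hence $\set{\hat y, \hat z}$ is a two-element antichain and a generated subframe of $V$, with complex algebra $\two \times \two$. Since $\two$ is always projective in $\Va$, this is a product of two finite projective algebras, but it is not projective: a retraction $V \onto \set{\hat y, \hat z}$ would have to send $x$ below both $\hat y$ and $\hat z$, impossible in an antichain. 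By Theorem \ref{thm:filter}, unification is not filtering.

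The main obstacle lies in direction $(\Leftarrow)$, namely the three-way decomposition $V = V_0 \sqcup V_1 \sqcup V_2$: verifying each $V_i$ is clopen in $V$ requires iterated applications of convergence, and the stray component $V_0$ demands a separate appeal to injectivity to produce the morphism $V_0 \to W_1$.
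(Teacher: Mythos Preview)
The paper does not prove this lemma; it is cited from \cite{gs04}, so there is no in-paper argument to compare against. For $(\Rightarrow)$ your reasoning is sound once a finite $\Va$-frame refuting convergence is in hand, but ``canonicity plus filtration'' does not supply one: canonicity yields a non-convergent $\Cf(A)$ for some $A\in\Va$, yet a filtration of that frame need not have complex algebra in $\Va$ unless $\Va$ itself admits filtration, which is not assumed. The repair is immediate from the splitting $\klam{\Eq(\BF),\Va_G}$ recorded just before the lemma: $\Va\not\le\Va_G$ forces $\BF\in\Va$, and then $V=\FF$ is the required finite $\Va$-frame. From there your argument with $\hat y,\hat z$ and $\two\times\two$ is correct.

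The substantive gap is in $(\Leftarrow)$. Your correspondence between finite projective algebras and finite injective frames, together with the restriction to finite $V$, is justified only when $\Va$ is locally finite (Lemma~\ref{lem:fin}); the lemma concerns arbitrary $\Va\le\VaCl$, and already $\Va_G$ is not locally finite. Transporting your decomposition to $\Cf(A)$ for infinite $A\in\Va$ does not save the argument either: to invoke injectivity of $W_i$ on the piece $V_i$ you would need $\Cm(V_i)\in\Va$ (i.e.\ canonicity of $\Va$, not assumed), and even then a section $B_i\into\Cm(V_i)$ dualizes to a map out of $\Cf(\Cm(V_i))\supsetneq V_i$. The missing idea is a direct algebraic use of \eqref{G}: given $p\colon A\onto B_1\times B_2$ and any $c_1\in p^{-1}(1,0)$, the element $c\df f^\partial f(c_1)$ is clopen in $A$ (it is open by form, and \eqref{G} applied to $x=f(c_1)$ gives $f(c)=f f^\partial f(c_1)\le f^\partial f f(c_1)=c$), so $A\cong\da c\times\da{-c}$ with $p$ restricting to surjections $\da c\onto B_1$ and $\da{-c}\onto B_2$. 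Sections of these, supplied by projectivity of the $B_i$, combine to a section of $p$, with no finiteness hypothesis on $A$.
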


\begin{theorem}\label{thm:BFnotUni}
If $\Va$ has finitary unification, then $\BF \in \Va$.
\end{theorem}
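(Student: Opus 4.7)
The plan is to argue by contradiction using the splitting pair $\klam{\Eq(\BF), \Va_G}$ together with the characterization of filtering unification via the Geach variety.

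Assume $\BF \notin \Va$. Since $\klam{\Eq(\BF), \Va_G}$ is a splitting pair of $\VaCl$ and $\Eq(\BF) \not\leq \Va$, the splitting property forces $\Va \leq \Va_G$. By Lemma \ref{lem:filt}, unification in $\Va$ is filtering. Recall from Section \ref{sec:alguni} that if unification of $A$ is filtering then $U_A$ is directed with respect to $\preccurlyeq$, so $t(A) \in \set{1,0}$; hence every unifiable $A \in \Va$ has type $1$ or type $0$.

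Now combine this with the hypothesis that $t(\Va) = \omega$. By the definition of finitary unification type of a variety, every unifiable $A \in \Va$ has $t(A) \in \set{1,\omega}$, and at least one unifiable $A \in \Va$ has $t(A) = \omega$. Intersecting the two restrictions $t(A) \in \set{1,0}$ and $t(A) \in \set{1,\omega}$ leaves only $t(A) = 1$ for every unifiable $A$. But then no unifiable algebra has type $\omega$, contradicting finitariness of $\Va$.

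The argument is essentially a bookkeeping exercise once the two external ingredients (the splitting pair description from \cite{rau80} and Lemma \ref{lem:filt}) are in place; no new construction is needed. The only subtle point to double-check is the compatibility of the variety-level type definitions with the filtering implication at the algebra level, namely that a filtering $\Va$ cannot have a unifiable member of type $\omega$. This is immediate from the fact that filtering of $U_A$ means any two unifiers have a common upper bound in $\preccurlyeq$, precluding an incomparable pair in any $\mu$-set and thus forcing $|\mu| \leq 1$.
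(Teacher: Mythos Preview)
Your proof is correct and takes essentially the same approach as the paper: both rely on the splitting pair $\klam{\Eq(\BF),\Va_G}$ together with Lemma~\ref{lem:filt}, and on the observation that filtering unification forces the type of each unifiable algebra into $\set{1,0}$. The only difference is cosmetic---you argue by contradiction from $\BF \notin \Va$, whereas the paper proceeds directly from finitarity to ``some $A$ is not filtering'' and hence $\Va \not\leq \Va_G$; these are contrapositives of one another.
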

\begin{proof}
Suppose that $\Va$ has finitary unification. Then, there is some $A \in \Va$ whose set of unifiers is not directed, that is, unification for $A$ is not filtering. From Lemma \ref{lem:filt}  we obtain $\Va\not\leq \Va_G$, and thus, $\Eq(\BF) \leq \Va$ by the splitting result.
\end{proof}

\section{The weak disjunction property}

In this section we shall give another characterization of varieties containing \BF, and relate it to the unification type of a variety of closure algebras. 

Recall, e.g. from \cite[\S 11]{bs_ua}, that for a term $\tau(x_1, \ldots, x_n)$, $\Va \models \tau(x_1, \ldots, x_n) \approx 1$ \tiff for all $A \in \Va$ and all $a_1, \ldots, a_n \in A$ we have $\tau(a_1, \ldots, a_n) = 1$. If $\klam{A,f} \in \Va$ we let $A_{\two}$ be the Boolean reduct of $\klam{A,f}$ augmented by the identity operator; then, $A_{\two} \in \Eq(\two)$, and each $A \in \Eq(\two)$ has this form. We say that \Va satisfies the  \emph{weak disjunction property} (WDP) \cite{dzik06} if for all terms $\tau_1(x_1, \ldots, x_n)$, $\tau_2(y_1, \ldots, y_k)$ in the language of \Va,
\begin{multline}\label{WDP}\tag{\textbf{WDP}}
\Va \models f^\partial(\tau_1(x_1, \ldots, x_n)) + f^\partial(\tau_2(y_1, \ldots, y_k)) \approx 1 \Implies \\ \Eq(\two) \models \tau_1(x_1, \ldots, x_n) \approx 1 \tor \Eq(\two) \models \tau_2(y_1, \ldots, y_k) \approx 1.
\end{multline}
The WDP is a weakening of the Disjunction Property which on the right side of the implication has $\Va \models \tau_1(x_1, \ldots, x_n) \approx 1 $ or $\Va \models \tau_2(y_1, \ldots, y_k) \approx 1$.%
\footnote{~Both in (DP) and (WDP) strings $x_1, \ldots, x_n$ and $y_1, \ldots, y_k$ may contain variables in common. (DP) with disjoint variables $x_i$ and $y_j$ is related (in \VaCl) to Halld{\'e}n completeness, see \cite[Section 15]{cz97}. To the best of our knowledge the Disjunction Property for the modal logic \textbf{S4}  was first proved by \citet[Theorem $\lambda$]{rs54}. For an overview of the Disjunction Property in modal logics see \cite[Chapter 15]{cz97}.} 

The next result shows that the WDP is equivalent relative to \VaCl to the axioms characterizing  $\Eq(\BF)$, namely, \eqref{grz}, \eqref{bd2}, \eqref{bw2}.  

 Since an equation holds in  $\Eq(\two)$ \tiff it holds in \two (as in the equivalent variety of Boolean algebras \cite[Proposition 2.19]{kop89}), we may write the right hand side of \eqref{WDP} as 
 \begin{gather*}
 \two \models \tau_1(x_1, \ldots, x_n) \approx 1 \tor \two \models \tau_2(y_1, \ldots, y_k) \approx 1
\end{gather*} 
\begin{theorem}\label{thm:WDPFork}
\Va has the WDP \tiff $\Eq(\BF) \leq \Va$.
\end{theorem}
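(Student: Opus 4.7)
The plan is to prove the two implications separately. For $(\Leftarrow)$ I would work directly inside $\BF$; for $(\Rightarrow)$ I would apply the splitting pair $(\Eq(\BF),\Va_G)$ recalled in Section~\ref{sec:W} together with a one-line rewriting of the Geach identity.

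For $(\Leftarrow)$, assume $\Eq(\BF)\leq\Va$, so $\BF\in\Va$. Label the atoms of $\BF$ as $u,v,w$, with $u$ the unique closed atom and $f(v)=u+v$, $f(w)=u+w$. The two one-point generated subframes $\set{v}$ and $\set{w}$ of $\FF$ yield, by duality, surjective homomorphisms $h_1,h_2\colon\BF\onto\two$ with kernels $\da{(u+w)}$ and $\da{(u+v)}$ respectively; since $\ker(h_1)\cap\ker(h_2)=\da{u}$, the product map $\phi=(h_1,h_2)\colon\BF\to\two\times\two$ is surjective. Now suppose, toward a contradiction, that $\Va\models f^\partial(\tau_1)+f^\partial(\tau_2)\approx 1$ while neither $\two\models\tau_1\approx 1$ nor $\two\models\tau_2\approx 1$, witnessed by refuting assignments $\bar a,\bar b$ in $\two$. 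Using surjectivity of $\phi$, I would choose, for every variable $z$ occurring in $\tau_1$ or $\tau_2$, an element $\tilde z\in\BF$ such that $h_1(\tilde z)$ matches $\bar a$ whenever $z$ is a variable of $\tau_1$ and $h_2(\tilde z)$ matches $\bar b$ whenever $z$ is a variable of $\tau_2$; variables shared between $\tau_1$ and $\tau_2$ cause no trouble precisely because $\phi$ surjects onto $\two\times\two$. Evaluating in $\BF$ then forces $\tau_1\in\ker(h_1)=\da{(u+w)}$ and $\tau_2\in\ker(h_2)=\da{(u+v)}$, and monotonicity of $f^\partial$ with $f^\partial(u+w)=w$ and $f^\partial(u+v)=v$ gives $f^\partial(\tau_1)+f^\partial(\tau_2)\leq v+w\neq 1$, a contradiction.

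For $(\Rightarrow)$, assume $\Va$ has the WDP. The splitting pair $(\Eq(\BF),\Va_G)$ gives $\Eq(\BF)\leq\Va$ or $\Va\leq\Va_G$, so I only need to rule out the second alternative. Substituting $-x$ for $x$ in the Geach identity $f(f^\partial(x))\leq f^\partial(f(x))$ and using $f^\partial(-x)=-f(x)$ together with $f(-y)=-f^\partial(y)$ rewrites Geach as $-f^\partial(f(x))\leq f^\partial(f(-x))$, i.e.\ as the equation $f^\partial(f(x))+f^\partial(f(-x))\approx 1$. Hence every subvariety of $\Va_G$ satisfies this identity. Taking $\tau_1\df f(x)$ and $\tau_2\df f(-x)$, the WDP would force $\two\models f(x)\approx 1$ or $\two\models f(-x)\approx 1$, both plainly false; since $\two\in\Va$ by non-triviality of $\Va$, this contradicts the WDP. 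Therefore $\Va\not\leq\Va_G$, and the splitting result gives $\Eq(\BF)\leq\Va$.

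The main conceptual step I expect to lean on is the observation that Geach is algebraically just $f^\partial(f(x))+f^\partial(f(-x))\approx 1$; once this is noticed, the $(\Rightarrow)$ half is essentially immediate from the splitting. The only delicate bookkeeping is in $(\Leftarrow)$, where the two refuting assignments $\bar a$ and $\bar b$ must be combined into a single $\BF$-assignment that agrees on variables shared between $\tau_1$ and $\tau_2$; this is handled cleanly by passing to the product homomorphism $\phi\colon\BF\onto\two\times\two$ rather than treating $h_1$ and $h_2$ in isolation.
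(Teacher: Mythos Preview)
Your proof is correct. The $(\Rightarrow)$ direction is essentially identical to the paper's: both invoke the splitting pair $\klam{\Eq(\BF),\Va_G}$, rewrite the Geach identity as $f^\partial(f(x))+f^\partial(f(-x))\approx 1$, and observe that the WDP then forces an absurdity in $\two$. (The clause ``since $\two\in\Va$'' in your argument is superfluous---the contradiction is already at the level of $\Eq(\two)$---but harmless.)

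For $(\Leftarrow)$ the two arguments differ in execution. The paper simply takes the refuting $0,1$-tuples $\vec p,\vec q$, plugs them into $\BF$, and uses the lattice of open elements of $\BF$ (where $o_1+o_2=1$ forces $o_1=1$ or $o_2=1$) to derive the contradiction. Your route instead passes through the surjection $\phi=(h_1,h_2)\colon\BF\onto\two\times\two$ and bounds $f^\partial(\tau_1)+f^\partial(\tau_2)$ by $v+w$. The payoff of your version is that it handles variables shared between $\tau_1$ and $\tau_2$ cleanly: when the two refuting $\two$-assignments disagree on a common variable, the paper's direct substitution of $\vec p$ and $\vec q$ is not literally a single assignment in $\BF$, whereas your lift via $\phi$ produces one automatically. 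The paper's argument, on the other hand, is shorter and makes the role of the open-element structure of $\BF$ more visible. Both exploit the same underlying fact, namely that $\BF$ sits over $\two\times\two$ with the join of the two nontrivial open coatoms strictly below $1$.
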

\begin{proof}
\aright Suppose that $\Va \le \VaCl$ has the \emph{WDP}, and assume that $\Eq(\BF) \not\leq \Va$. Then, $\Va \leq \Va_G$, and thus, $f^\partial(f(-a))) + f^\partial(f(a)) = 1$ for all $A \in \Va$ and all $a \in A$. Let $A \in \Va, A \neq \two$. Then, the WDP implies that $A_{\two} \models -x \approx 1$ or $A_{\two} \models x \approx 1$, and thus, $-a = 1$ or $a = 1$ for all $a \in A$,  contradicting that $A \neq \two$. 

\aleft  Suppose that $\Eq(\BF) \leq \Va$. We will show the contrapositive of \eqref{WDP}. Suppose that there are terms $\tau_1(x_1, \ldots, x_n), \tau_2(y_1, \ldots, y_k)$ such that neither $\two \models \tau_1(x_1, \ldots, x_n)$ nor  $\two \models \tau_2(y_1, \ldots, y_k)$. Then, there are $0,1$ tuples $\vec{p}$ and $\vec{q}$ of length $n$, respectively, $k$ such that $\tau_1(\vec{p}) = \tau_2(\vec{q}) = 0$. We need to show that $\Va \not\models f^\partial(\tau_1(x_1, \ldots, x_n)) + f^\partial(\tau_2(y_1, \ldots, y_k)) \approx 1$, that is, we need to find some $A \in \Va, a_1, \ldots, a_n, b_1, \ldots b_k$ such that $f^\partial(\tau_1(a_1, \ldots, a_n)) + f^\partial(\tau_2(b_1, \ldots, b_k)) \neq 1$.  Consider $A \df \BF$, $\klam{a_1, \ldots, a_n} \df \vec{p}$ and $\klam{b_1, \ldots, b_k} \df \vec{q}$, and assume that $f^\partial(\tau_1(\vec{p})) + f^\partial(\tau_2(\vec{q})) = 1$. Inspecting the open elements of $\BF$ in Figure \ref{fig:forkA}, we see that $f^\partial(\tau_1(\vec{p})) = 1$ or $f^\partial(\tau_2(\vec{q})) = 1$. This implies that $\tau_1(\vec{p}) = 1$ or $\tau_2(\vec{q}) = 1$, contradicting the hypothesis.
\end{proof}

Unification is related to the WDP by the following result:
\begin{lemma}\label{lem:wdpuni}
\cite[Lemma 9]{dzik06} If $\Va$ satisfies the WDP, it does not have unitary unification.
\end{lemma}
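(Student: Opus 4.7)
The plan is to chain three results from the paper---Theorem \ref{thm:WDPFork}, the splitting pair $\klam{\Eq(\BF),\Va_G}$, and Lemma \ref{lem:filt}---and then close the argument with the elementary observation that unitary unification implies filtering unification.

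First I will transfer the WDP hypothesis to the fork. By Theorem \ref{thm:WDPFork}, any $\Va \leq \VaCl$ satisfying the WDP contains $\Eq(\BF)$, so in particular $\BF \in \Va$. Since $\Va_G$ is, by construction, the splitting companion of $\BF$---the largest subvariety of $\VaCl$ omitting $\BF$---the containment $\BF\in\Va$ forces $\Va \not\leq \Va_G$.

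Next I will invoke Lemma \ref{lem:filt}, which states that unification in $\Va$ is filtering iff $\Va \leq \Va_G$. Combining with the previous step, unification in $\Va$ is \emph{not} filtering, so by definition there exists a unifiable $A \in \Va$ whose unifier set $U_A$ fails to be directed under $\preccurlyeq$; concretely, two unifiers of $A$ admit no common $\succcurlyeq$-upper bound.

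To finish I will observe that unitary unification implies filtering unification. Indeed, if $A$ had type $1$ then a $\mu$-set of cardinality one would provide a $\succcurlyeq$-maximum $[u]$ in $U_A/{\approx}$; this $u$ would serve as a common upper bound for every pair of unifiers, making $U_A$ directed and contradicting what was produced above. Hence $\Va$ does not have unitary unification. I anticipate no serious obstacle: the substantial content is packaged in Theorem \ref{thm:WDPFork}, the splitting theory for $\BF$ and $\Va_G$, and the imported Lemma \ref{lem:filt}; the only bookkeeping point is the passage from ``not filtering'' to ``not unitary'', which is immediate from the $\mu$-set definition in Section \ref{sec:def}.
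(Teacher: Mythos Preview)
Your argument is correct, but it takes a genuinely different route from the paper. In fact, the paper does not prove Lemma~\ref{lem:wdpuni} at all: immediately after stating it, the authors write that the proof in \cite{dzik06} is syntactic and that ``unfortunately we have not found an algebraic proof.'' Your chain
\[
\text{WDP} \;\overset{\text{Thm~\ref{thm:WDPFork}}}{\Longrightarrow}\; \BF \in \Va \;\overset{\text{splitting}}{\Longrightarrow}\; \Va \not\leq \Va_G \;\overset{\text{Lem~\ref{lem:filt}}}{\Longrightarrow}\; \text{not filtering} \;\Longrightarrow\; \text{not unitary}
\]
is exactly such an algebraic proof, assembled entirely from results already established or cited earlier in the paper. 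The only step not literally spelled out in the paper is ``unitary $\Rightarrow$ filtering,'' and you justify it correctly from the definition of a $\mu$-set of cardinality one.

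Comparing the two approaches: the cited syntactic proof from \cite{dzik06} is presumably direct and self-contained, while your route leans on the heavier machinery of the splitting pair $\klam{\Eq(\BF),\Va_G}$ and the Ghilardi--Sacchetti characterization of filtering (Lemma~\ref{lem:filt}). What your approach buys is that it renders the paper's subsequent Theorem (``if $\BF \in \Va$ then $\Va$ does not have unitary unification'') essentially immediate without the detour back through WDP and Lemma~\ref{lem:wdpuni}; indeed, your middle three implications already give that theorem directly. It is worth noting that your proof is only as ``algebraic'' as the imported Lemma~\ref{lem:filt} from \cite{gs04}, but within the logical structure of this paper your derivation is entirely legitimate and non-circular.
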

The proof in \cite{dzik06} is a simple syntactic one, and unfortunately we have not found an algebraic proof. We will use this result to obtain a converse to Theorem \ref{thm:BFnotUni}, relative to varieties of finitary type:

\begin{theorem}
If  $\BF \in \Va$, then \Va does not have unitary unification.
\end{theorem}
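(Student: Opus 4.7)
The plan is to chain together the two previous results: Theorem~\ref{thm:WDPFork} (characterizing varieties containing $\BF$ via the WDP) and Lemma~\ref{lem:wdpuni} (stating that the WDP rules out unitary unification). First I would observe that since $\Eq(\BF)$ is by definition the variety generated by $\BF$, the assumption $\BF \in \Va$ is equivalent to the inclusion $\Eq(\BF) \leq \Va$, because $\Va$ is closed under $\mathbf{HSP}$ and hence contains every algebra in the variety generated by any of its members.

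Next I would invoke Theorem~\ref{thm:WDPFork}, in the $\Leftarrow$ direction: the inclusion $\Eq(\BF) \leq \Va$ implies that $\Va$ satisfies the weak disjunction property. Finally, applying Lemma~\ref{lem:wdpuni} to $\Va$ yields that $\Va$ does not have unitary unification, which is the desired conclusion.

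There is no real obstacle here; the whole statement is a two-line corollary of results already established. The only subtlety worth flagging explicitly is the first step, namely that membership of a single generator $\BF$ forces the inclusion of the whole generated variety $\Eq(\BF)$, but this is immediate from $\mathbf{HSP}$-closure of a variety and is used throughout the paper without comment.
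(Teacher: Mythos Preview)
Your proposal is correct and matches the paper's proof essentially verbatim: the paper also derives the WDP from Theorem~\ref{thm:WDPFork} and then applies Lemma~\ref{lem:wdpuni}. The only difference is that you make the step $\BF \in \Va \Rightarrow \Eq(\BF) \leq \Va$ explicit, which the paper takes for granted.
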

\begin{shortproof}
If $\BF \in \Va$, then \Va satisfies the WDP by Theorem \ref{thm:WDPFork}, hence, it does not have unary type by Lemma \ref{lem:wdpuni}. 
\end{shortproof}
The location of the unification types relative to the splitting $\klam{\Eq(\BF), \Va_{G}}$ is shown in Figure \ref{fig:split}. It is not known whether there is a variety of closure algebras of type $\infty$; we indicate this by adding a ``?''.

\begin{figure}[htb]
\caption{Splitting and unification types}\label{fig:split}
 \begin{center}
  \begin{tikzpicture}
    \draw [very thick](0,-2.5) -- (3,0) -- (1.5,1.5) --(0,0)-- cycle;
\node[right] at (0.3,0) {$\omega,\; 0, \; \infty (?)$};
    \draw [fill=black, very thick](0,-2.5) circle(0.1);
    \node[below left] at (0,-2.5) {$\Eq(\BF)$};
    \draw [fill=white, very thick](1.5, 1.5) circle(0.1);
\node[above right] at (1.5,1.5) {$\mathrm{\Va_{CL}}$};
    \draw [very thick](0,-2.75) -- (3,-0.25) -- (3,-1.75) --(1.5,-3.25) -- cycle;
\node[left] at (2.70, -1.75) {$1,\; 0$};    
\draw [fill=black, very thick](3,-0.25) circle(0.1);
       \node[above right] at (3,-0.25) {$\mathrm{\Va_G}$};    
       \draw [fill=white, very thick](1.5,-3.25) circle(0.1);
       \node[below left] at (1.5,-3.25) {$\Eq(\two)$};
  \end{tikzpicture}
\end{center}  
\end{figure}
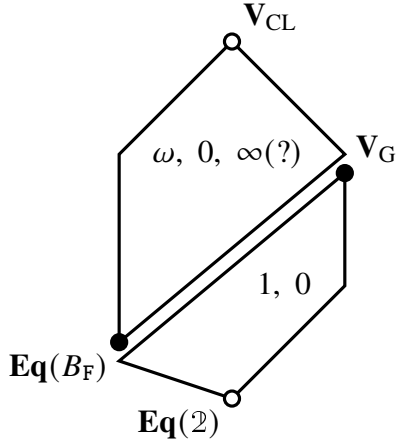 

We see that $\Eq(\BF)$ is the smallest variety of closure algebras with finitary unification. In this sense, the fork algebra $\BF$ plays the role of  a test algebra for unitary (finitary) unification in varieties of closure algebras: If \Va has unitary (finitary) unification, then it does not (does) contain $\BF$. The restriction to varieties with finitary unification is essential: \citet{dkw22,dkw23} presented infinitely many varieties of locally finite Heyting algebras with unification zero. This implies that there are infinitely many locally finite varieties of Grzegorczyk algebras which have unification type $0$ and do not contain $\BF$, and infinitely many that contain $\BF$. 

\section*{Acknowledgment}

We are very grateful to the reviewers for repeated careful reading and many suggestions which improved the manuscript considerably.

 \section*{References}
\renewcommand*{\refname}{}
\vspace{-10mm}

\end{document}